\definecolor{kgreen}{rgb}{0.09, 0.45, 0.27}
\definecolor{com}{rgb}{0.8, 0.0, 0.0}
\newcommand{\ldml}{\mathsf{G3Ldm}_{n}^{m}}
\newcommand{\calc}{\mathsf{Ldm}_{n}^{m}\mathsf{L}}
\newcommand{\ldmlsa}{\mathsf{Ldm}_{n}^{1}\mathsf{L}}
\newcommand{\stit}{\mathsf{STIT}}
\newcommand{\ldm}{\mathsf{Ldm}_{n}^{m}}
\newcommand{\ldmn}{\mathsf{Ldm}_{n}^{m}}
\newcommand{\ldmp}[2]{\mathsf{Ldm_{#1}^{#2}}}
\newcommand{\ldmsa}{\mathsf{Ldm}_{n}^{1}}
\newcommand{\cut}{(\mathsf{cut})}
\newcommand{\wk}{(\mathsf{wk})}
\newcommand{\ctr}{(\mathsf{ctr})}
\newcommand{\refl}{(\mathsf{refl})}
\newcommand{\stitrefl}{(\mathsf{refl}_{i})}
\newcommand{\stiteucl}{(\mathsf{eucl}_{i})}
\newcommand{\ioa}{(\mathsf{IOA})}
\newcommand{\bridge}{(\mathsf{Bridge})}
\newcommand{\id}{(\mathsf{id})}
\newcommand{\choicerule}{(\mathsf{APC}_{n}^{i})}
\newcommand{\disr}{(\vee)}
\newcommand{\conr}{(\wedge)}
\newcommand{\settr}{(\Box)}
\newcommand{\settdiar}{(\Diamond)}
\newcommand{\stitr}{(\agbox{})}
\newcommand{\stitdiar}{(\agdia)}
\newcommand{\propag}{(\mathsf{Pr}_{i})}
\newcommand{\R}{\mathcal{R}}
\newcommand{\agdia}{\langle i \rangle}
\newcommand{\agdiaone}{\langle 1 \rangle}
\newcommand{\agbox}{[i]}
\newcommand{\agboxone}{[1]}
\newcommand{\lb}{\langle}
\newcommand{\rb}{\rangle}
\newcommand{\apc}{(\mathsf{APC}^i_{n})}
\newenvironment{customlem}[1]
  {\innercustomlem}
  {\endinnercustomlem}
\newenvironment{customthm}[1]
  {\innercustomthm}
  {\endinnercustomthm}
\begin{document}

\title{Automating Agential Reasoning: Proof-Calculi and Syntactic Decidability for STIT Logics\thanks{This is a pre-print of an article published in Lecture Notes in Artificial Intelligence. The final authenticated version is available online at: \url{https://doi.org/10.1007/978-3-030-33792-6_13}. Work funded by the projects WWTF MA16-028, FWF I2982 and FWF W1255-N23.}
} %This is a pre-print of an article published in Lecture Notes in Artificial Intelligence. The final authenticated version is available online at: \url{(insert)}. 

\author{Tim Lyon\textsuperscript{(\Letter)} \and Kees van Berkel} %\orcidID{0000-0003-3214-0828}

\institute{Institut f\"ur Logic and Computation, Technische Universit\"at Wien, 1040 Wien, Austria  \\ \email{\{lyon,kees\}@logic.at}}

%\author{}

%\institute{}

%======================================================
%Here are our suggested authorrunning and titlerunning:
%======================================================

\titlerunning{Automating Agential Reasoning}

\authorrunning{T. Lyon and K. van Berkel}

\maketitle

\begin{abstract} %103 words
This work provides proof-search algorithms and automated counter-model extraction for a class of $\stit$ logics. With this, we answer an open problem concerning syntactic decision procedures and cut-free calculi for $\stit$ logics. A new class of cut-free complete labelled sequent calculi $\ldml$, for multi-agent $\stit$ with at most $n$-many choices, is introduced. We refine the calculi $\ldml$ through the use of propagation rules and demonstrate the admissibility of their structural rules, resulting in the auxiliary calculi $\calc$. In the single-agent case, we show that the refined calculi $\calc$ derive theorems within a restricted class of (forestlike) sequents, allowing us to provide proof-search algorithms that decide single-agent $\stit$ logics. We prove that the proof-search algorithms are correct and terminate. % and are correct, and touch on the multi-agent setting.
\keywords{Decidability · Labelled calculus · Logics of agency · Proof search · Proof theory · Propagation rules · Sequent · $\stit$ logic}
\end{abstract}

\section{Introduction}\label{sec_intro}

    Modal logics of $\stit$, an acronym for `seeing to it that', have a long tradition in the formal investigation %analysis 
of agency, starting with a series of papers by Belnap and Perloff in the 1980s %(\textit{e.g.} \cite{BelPer90}) K: I would omit the reference here, we use it later I guess. For the sake of readability I commented it our.
 and culminating in \cite{BelPerXu01}. For the past decades, $\stit$ logic has continued to receive %has received
 considerable attention, proving itself invaluable in a multitude of fields concerned with formal reasoning about agentive choice making. For example, the framework has been applied to epistemic logic \cite{Bro11}, deontic logic \cite{Hor01,Mur04}, and the formal analysis of legal reasoning \cite{Bro11,LorSar15}. Surprisingly, investigations of the mathematical properties of $\stit$ logics are limited %has been scant %(limited)
  %(as observed in 
  \cite{BalHerTro08,Sch12} and its proof-theory has only been addressed recently %\textit{e.g.}
  \cite{BerLyo19,Wan06}. What is more, despite AI-oriented $\stit$ papers motivating the need of tools for automated reasoning about agentive choice-making \cite{ArkBriBel05,BalHerTro08,BerLyo19}, the envisaged automation results are still lacking.
  %\footnote{\textcolor{blue}{\cite{ArkBriBel05} contains a natural deduction calculus for a $\stit$ logic (from~\cite{Mur04}) as well as an implementation of the calculus in Athena. However, termination of the implementation is not shown. Furthermore, the presence of a cut-like rule in the system of \cite{ArkBriBel05}, without a proof of  leaves open the possibility that the proof-search will not always terminate.}}. 
 The present work will be the first to provide terminating, automated proof-search for a class of $\stit$ logics, including %together with 
 counter-model extraction directly based %directly
  on failed proof-search. %

The \textit{sequent calculus} %, originating in Gentzen 
\cite{Gen35} %,
 is an effective framework for proof-search, % that is %being
  suitable for automated deduction procedures. Given the metalogical property of \textit{analyticity}, a sequent calculus allows for the construction of proofs by merely decomposing the formula in question. %to be proven. 
%In the current setting w
In the present work, we employ the \textit{labelled} sequent calculus---a useful formalism for a large class of modal logics \cite{Neg05,Vig00}---and introduce %cut-free complete 
 labelled sequent calculi $\ldml$ (with $n,m \in\mathbb{N}$) for multi-agent $\stit$ logics containing %extended with 
 limited choice axioms, discussed in \cite{Xu94-2}. %thereby
   % with $n$-many choices choice limiting rules (henceforth, $\ldml$ with $n \in \mathbb{N}$). % ($\ldm$ with $n \in \mathbb{N}$).} 

 In order to appropriate the calculi $\ldml$ for automated proof-search, we take up a \textit{refinement} method presented in \cite{TiuIanGor12}---developed for the more restricted setting of display logic---% setting (cf. \cite{TiuIanGor12}), 
and adapt it to the more general setting of labelled %equent
 calculi. %setting.
  In the refinement process the \textit{external} character of labelled systems---namely, the explicit presence of the semantic structure---is made \textit{internal} through the use of alternative, yet equivalent, \textit{propagation rules} \cite{TiuIanGor12}. The tailored propagation rules restrict and simplify %and reduce
the sequential structures needed in derivations, producing, for example, shorter proofs. % and %offering a 
% simplifying analysis of%of the role each 
%  syntactic components % of a sequent plays 
%   in %during 
%   proof-search \cite{??}. 
  Moreover, %Subsequently, 
  one can show that through the use of propagation rules, the %certain
  structural rules, capturing the behavior of the logic's modal operators, are admissible. In our case, the resulting refined calculi $\calc$ %for a logic $\ldmsa$ (with $n \in \mathbb{N}$) 
  derive theorems using only %a restricted class of
   forestlike sequents, allowing us to adapt methods from \cite{TiuIanGor12} and provide %yield 
  correct and terminating proof-search algorithms for this class of $\stit$ logics. %In fact, we demonstrate that any theorem of the logic can be derived using only sequents from this class. 

In short, the contribution of this paper will be threefold: First, we provide new sound and cut-free complete labelled sequent calculi $\ldml$ for all multi-agent $\stit$ logics $\ldm$ (with $n,m\in\mathbb{N}$) discussed in \cite{Xu94-2}---thus extending the class of logics addressed in \cite{BerLyo19}. Second, we show how to refine these calculi to obtain new calculi $\calc$, which are suitable for proof-search. Last, for each $n \in \mathbb{N}$, we provide a terminating proof-search algorithm deciding the single-agent $\stit$ logic $\ldmsa$. Although \cite{GroLorSch15} provides a polynomial reduction of $\ldmn$ into the modal logic $\mathsf{S5}$ (providing decidability via $\mathsf{S5}$-SAT), the present work has the advantage that it offers a syntactic decision procedure within the unreduced $\ldmn$ language and is modular, that is, it will allow us to extend our work to a variety of $\stit$ logics. We conclude by discussing the prospects of generalizing the latter results to the multi-agent setting.

%In particular,  %, and additionally provide a proof-calculus for each multi-agent $\stit$ logic $\ldmn$ (with $n,m \in \mathbb{N}$) which we conjecture can be harnessed for proof-search. % (\textit{cf.} \ref{ArkBriBel05}). % (with $n\in \mathbb{N}$).  
 
%Still, the current work shows the utility of leveraging refinement results for automating agential reasoning. This highlights the potential of extending the current results and techniques to the multi-agent setting with the aim of eventually automating reasoning involved in normative decision-making.

The paper is structured as follows: We start by introducing the class of logics $\ldm$ in Sec.~\ref{sec_ldm}. In Sec.~\ref{sec_ldml}, corresponding labelled calculi $\ldml$ are provided, which %, subsequently,
 will subsequently be refined, resulting in the calculi $\calc$. % (with $n \in \mathbb{N}$). 
 We devote Sec.~\ref{sec_proofsearch} to proof-search algorithms and counter-model extraction. % for $\ldm$. 
%We round up with 
% Concluding remarks and envisioned future work will be addressed in Sec.~\ref{sec_concl}.
    
\section{Logical Preliminaries}\label{sec_ldm}

$\stit$ logic refers to %comprises
 a group of modal logics using operators that capture agential choice-making. %To the present day, several $\stit$ operators have been proposed, such as \textit{achievement} $\stit$ \cite{BelPerXu01}, \textit{deliberative} $\stit$ \cite{BelHor95} and \textit{next} $\stit$ \cite{Bro11}.  
  The $\stit$ logics $\ldmn$, %and basicfundamental 
% operator, 
 which will be considered throughout this paper, employ two types of modal operators: First, they contain a \textit{settledness} operator $\Box$ expressing which formulae are `settled true' at a current moment. Second, they contain, for each agent $i$ in the language, an atemporal---i.e., instantaneous---\textit{choice} operator $[i]$ expressing that `agent $i$ sees to it that'. This basic choice operator is referred to as the \textit{Chellas} $\stit$ \cite{BelPerXu01}. Using both operators, %With these two types of operators, 
 one can define the more refined notion of \textit{deliberative} $\stit$: % operator:
  i.e., $[i]_d\phi$ iff $[i]\phi\land\lnot \Box \phi$.  Intuitively, %A deliberative $\stit$ formula 
 $[i]_d\phi$ holds when `agent $i$ sees to it that $\phi$ and it is possible for $\phi$ not to hold'. %, thus implying that $i$'s choice produced consequences that might not have held had $i$ chosen differently. In other words, this operator is defined as follows: 
 The multi-agent language for $\ldmn$ is defined accordingly: %\textcolor{kgreen}{K: In the case of lack of space, we can merge def 1 with the body of the text}

\begin{definition}[The Language $\mathcal{L}^m$~\cite{HerSch08}]\label{def:ldm-language} Let $Ag = \{1,2,...,m\}$ be a finite set of agent labels %s.t. $|Ag| = m$ 
and let $Var =\{p,q,r...\}$ be a countable set of propositional variables. %The language 
 $\mathcal{L}^m$ is defined via the following BNF grammar:
\begin{center}
$\phi ::= p \ | \ \overline{p} \ | \ (\phi \wedge \phi) \ | \ (\phi\vee \phi) \ | \ (\Box \phi) \ | \ (\Diamond \phi) \ | \ ([i] \phi) \ | \ (\lb i \rb \phi)$
\end{center}
where $i\in Ag$ and $p \in Var$.
\end{definition}

Notice, the language $\mathcal{L}^m$ consists of formulae in negation normal form. This notation allows us to reduce the number of rules in our calculi, enhancing the readability and simplicity of our proof theory. The negation of $\phi$, written as $\overline{\phi}$, is obtained by replacing each operator with its dual, each positive atom $p$ with its negation $\overline{p}$, and each $\overline{p}$ with its positive variant $p$~\cite{BerLyo19}. Consequently, we obtain the following abbreviations: $\phi\rightarrow\psi \textit{ iff } \overline{\phi} \lor \psi$, $\phi\leftrightarrow \psi \textit{ iff } (\phi\rightarrow\psi)\land(\psi\rightarrow\phi)$, $\top \textit{ iff } p\lor\overline{p}$, and $\bot \textit{ iff } p\land\overline{p}$. We will freely use these abbreviations throughout this paper. Since we are working in negation normal form, diamond-modalities are introduced as separate primitive operators. We take $\agdia$ and $\Diamond$ as the duals of $[i]$ and $\Box$, respectively. %The formal interpretation of these operators is given in Def. \ref{Semantics_ldm_t}.

Following \cite{HerSch08}, since we work with instantaneous, atemporal $\stit$ %and, for this reason, 
 it suffices to regard only single choice-moments in our relational frames. This means that we can forgo the traditional branching time structures of basic, atemporal $\stit$ logic~\cite{BelPerXu01}. In what follows, we define $\ldm$ frames as those $\stit$ frames in which $n > 0$ limits the amount of choices available to each agent to at most $n$-many choices (imposing no limitation when $n = 0$).\footnote{For a discussion of the philosophical utility of reasoning with 
 limited choice see \cite{Xu94-2}.}

\begin{definition}[Relational $\ldm$ Frames and Models]\label{def:models-and-frames} Let $|Ag|=m$ and let $\R_{i}(w) := \{v\in W \ | \ (w,v)\in \R_{i}\}$ for $i \in  Ag$. An \emph{$\ldmn$-frame} is defined as a tuple $F = (W, \{\R_{i} \ | \ i \in Ag\})$ where $W\neq\emptyset$ is a set of worlds $w,v,u...$ and:

\begin{itemize}

\item[{\rm \textbf{(C1)}}] For each $i\in Ag$, $\R_{i} \subseteq W\times W$ is an equivalence relation;

\item[{\rm \textbf{(C2)}}] For all $u_{1},...,u_{m} \in W$, $\bigcap_{i} \R_{i}(u_{i}) \neq \emptyset$;

\item[{\rm \textbf{(C3)}}] Let $n > 0$ and $i \in Ag$, then %$\bigvee_{0 \leq k < n-1} \bigg( \bigvee_{k+1 \leq j \leq n-1} \R_{i}w_{k}w_{j} \bigg)$.
\vspace{-5pt}
\begin{center}
For all $w_{0}, w_{1}, \cdots, w_{n} \in W$, $\displaystyle{ \bigvee_{0 \leq k \leq n-1 \text{, } k+1 \leq j \leq n} \R_{i} w_{k}w_{j}} $
\end{center}

\end{itemize}
An $\ldm$-model is a tuple $M = (F,V)$ where $F$ is an $\ldm$-frame and $V$ is a valuation % function
 assigning propositional variables to subsets of $W$, i.e. $V{:}\ Var \mapsto \mathcal{P}(W)$. Additionally, we stipulate that condition \textbf{(C3)} is omitted when $n = 0$.
\end{definition}

As in~\cite{HerSch08}, the set of worlds $W$ is taken to represent a single moment in which agents from $Ag$ are making their decision. Following \textbf{(C1)}, for every agent $i$, the relation $\R_i$ is an equivalence relation; that is, $\R_i$ functions as a partitioning of $W$ into what will be called \textit{choice-cells} for agent $i$. Each choice-cell represents a set of possible worlds that may be realized by a choice of the agent. The condition \textbf{(C2)} expresses the $\stit$ principle \textit{independence of agents}, ensuring that any combination of choices, available to different agents, is consistent. The last condition \textbf{(C3)}, represents the $\stit$ principle which limits the amount of choices available to an agent to a maximum of $n$.  %contexts where agents' choices are limited.
%; a practical feature of our calculi is that they allow the derivation of theorems in such restricted contexts.
%A useful feature of our proof-search procedure (Sec.~\ref{sec_proofsearch}) is that reasoning in such scenarios can be captured by simply passing a parameter to the algorithm.
%It is remarkable that, although the latter principle has a fundamental agential character, it has been practically neglected for %during
 % the past decade. 
 For a philosophical discussion of these principles we refer to \cite[Ch.~7C]{BelPerXu01}.

\begin{definition}[Semantic Clauses for $\mathcal{L}^m$~\cite{BerLyo19,HerSch08}]\label{Semantics_ldm} Let $M$ be an $\ldm$-model $(W, \{\R_{i} \ | \ i \in Ag\}, V)$ and let $w$ be a world in its domain $W$. The \emph{satisfaction} of a formula $\phi\in\mathcal{L}^m$ on $M$ at $w$ is inductively defined as follows:
\begin{small}
\vspace{-1pt}
\addtolength{\linewidth}{1.5em}
\begin{multicols}{2}
\begin{itemize}
%\vspace{-0.3cm}
\item[1.] $M, w \Vdash p$ iff $w \in V(p)$

\item[2.] $M, w \Vdash \overline{p}$ iff $w \not\in V(p)$

\item[3.] $M, w \Vdash \phi \wedge \psi$ iff $M, w \Vdash \phi$ and $M, w \Vdash \psi$

\item[4.] $M, w \Vdash \phi \vee \psi$ iff $M, w \Vdash \phi$ or $M, w \Vdash \psi$

\end{itemize}
\columnbreak
\begin{itemize}
\item[5.] $M, w \Vdash \Box \phi$ iff $\forall u \in W$, $ M, u \Vdash \phi$

\item[6.] $M, w \Vdash \Diamond \phi$ iff $\exists u \in W$, $M, u \Vdash \phi$

\item[7.] $M, w \Vdash [i] \phi$ iff $\forall u \in \R_{i}(w)$, $M, u \Vdash \phi$

\item[8.] $M, w \Vdash \lb i \rb \phi$ iff $\exists u \in \R_{i}(w)$, $M, u \Vdash \phi$
\end{itemize}
\end{multicols}

\end{small}
\vspace{-1pt}
\noindent A formula $\phi$ is \emph{globally true} on $M$ (\textit{i.e.} $M \Vdash \phi$) iff it is satisfied at every world $w$ in the domain $W$ of $M$. A formula $\phi$ is \emph{valid} (\textit{i.e.} $\Vdash \phi$) iff it is globally true on every $\ldm$-model. Last, $\Gamma$ semantically implies $\phi$, written $\Gamma \Vdash \phi$, iff for all models $M$ and worlds $w$ of $W$ in $M$, if $M,w \Vdash \psi$ for all $\psi \in \Gamma$, then $M,w \Vdash \phi$.

\end{definition}
It is worth emphasizing %mentioning, as explained in the previous paragraph, 
 that the semantic interpretation of %the 
$\Box$ %-modality
 refers to the domain of the model in its entirety; i.e., $\phi$ is settled true iff $\phi$ is globally true. This is an immediate consequence of considering instantaneous $\stit$ in a single-moment setting (cf. semantics where a relation $\R_{\Box}$ is %explicitly 
 introduced for $\Box$, e.g., \cite{BerLyo19}). %-modality. 

The Hilbert calculus for $\ldm$ in Fig. 1 is taken from \cite{Xu94-2}. Apart from the propositional axioms, it consists of $\mathsf{S5}$ axiomatizations for $\Box$ and $[i]$, for each $i{\in} Ag$. It contains the standard bridge axiom $\bridge$, linking $[i]$ to $\Box$. Furthermore, it contains an independence of agents axiom $\ioa$, as well as an $n$-choice axiom $\apc$ for each $i{\in} Ag$. The rules are modus ponens and $\Box$-necessitation. 
\begin{figure}[t]\label{fig:hilbert_ldm}
\noindent\hrule
\begin{center}
\begin{tabular}{c @{\hskip 1em} c @{\hskip 1em} c @{\hskip 1em} c}

$\phi \rightarrow (\psi \rightarrow \phi)$

&

$(\overline{\psi} \rightarrow \overline{\phi}) \rightarrow (\phi \rightarrow \psi)$

&

$(\phi \rightarrow (\psi \rightarrow \chi)) \rightarrow ((\phi \rightarrow \psi) \rightarrow (\phi \rightarrow \chi))$

\end{tabular}

\ \\

\begin{tabular}{c @{\hskip 1.5em} c @{\hskip 1.5em} c @{\hskip 1.5em} c @{\hskip 1.5em} c }

$(\mathsf{S5}\Box)\;\; \Box (\phi \rightarrow \psi) \rightarrow (\Box \phi \rightarrow \Box \psi)$

&

$\Box \phi \rightarrow \phi$

&

$\Diamond \phi \rightarrow \Box \Diamond \phi$
&

$\Box \phi \vee \Diamond \overline{\phi}$

&

\end{tabular}
\end{center}

\begin{center}
\begin{tabular}{c @{\hskip 1.5em} c @{\hskip 1.5em} c @{\hskip 1.5em} c @{\hskip 1.5em} }

$(\mathsf{S5}[i])\;\;  \agbox{} (\phi \rightarrow \psi) \rightarrow (\agbox{} \phi \rightarrow \agbox{} \psi)$

&

$\agbox{} \phi \rightarrow \phi$

&

$\lb i \rb \phi \rightarrow \agbox{} \lb i \rb \phi$

&

$[i] \phi \vee \lb i \rb \overline{\phi}$

\end{tabular}
\end{center}

\begin{center}
\begin{tabular}{c @{\hskip 1em} c @{\hskip 1em} c @{\hskip 1em} c @{\hskip 1em} c}
$\ioa \; \bigwedge_{i \in Ag} \Diamond [i] \phi_{i} \rightarrow \Diamond ( \bigwedge_{i \in Ag}[i] \phi_{i})$

&

$\bridge \; \Box \phi \rightarrow [i] \phi$

&

\AxiomC{$\phi$}
\UnaryInfC{$\Box \phi$}
\DisplayProof

&

\AxiomC{$\phi$}
\AxiomC{$\phi \rightarrow \psi$}
\BinaryInfC{$\psi$}
\DisplayProof

\end{tabular}
\end{center}

\begin{center}
\begin{tabular}{c}
$\apc \; \Diamond [i] \phi_{1} \land \Diamond (\overline{\phi}_{1} \land [i] \phi_{2}) \land \cdots \land \Diamond (\overline{\phi}_{1} \land \cdots \land \overline{\phi}_{n-1} \land [i] \phi_{n}) \rightarrow \phi_{1} \lor \cdots \lor \phi_{n}$
\end{tabular}
\end{center}

%\noindent\rule{37.5em}{0.4pt}

\hrule
\caption{The Hilbert calculus for $\ldm$~\cite{BelPerXu01,Xu94-2}. A \emph{derivation of $\phi$ in $\ldm$} from a set of premises $\Gamma$, is written as $\Gamma \vdash_{\ldm} \phi$, and is defined inductively in the usual way. 
When $\Gamma$ is the empty set, we refer to $\phi$ as 
a \emph{theorem} and write $\vdash_{\ldm} \phi$.}
\vspace{-3pt}
\end{figure}

%\vspace{-10pt}

\begin{theorem}[Soundness and Completeness~\cite{HerSch08,Xu94-2}]
For any formula $\phi\in\mathcal{L}^m$, $\Gamma \vdash_{\ldm} \phi$ if and only if $ \Gamma \Vdash \phi$.
\end{theorem}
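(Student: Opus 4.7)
The plan is to follow the standard soundness and completeness arguments for normal multi-modal logics, adapted to the $\stit$ setting. For soundness, I would proceed by induction on the length of derivations. Propositional axioms and modus ponens are routine, as are the $\mathsf{S5}$ axioms and necessitation for $\Box$, which follow directly from $\Box$ being interpreted as the universal modality on the entire domain $W$ (in particular, $\Box \phi \vee \Diamond \overline{\phi}$ is immediate from the semantics of $\Diamond$ quantifying over $W$). The $\mathsf{S5}$ axioms for each $[i]$ follow from (C1), which makes $\R_i$ an equivalence relation, and $\bridge$ is immediate since $\R_i(w) \subseteq W$. For $\ioa$, I would unpack the semantics at $w$: given witnesses $u_i \in W$ with $[i]\phi_i$ true at $u_i$, (C2) supplies $v \in \bigcap_i \R_i(u_i)$, and since each $\R_i$ is an equivalence relation, $\R_i(v) = \R_i(u_i)$, so $[i]\phi_i$ holds at $v$, hence $\Diamond(\bigwedge_i [i]\phi_i)$ at $w$. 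For $\apc$, the antecedent yields witnesses $u_1, \ldots, u_n$; applying (C3) to the $n+1$ worlds $w, u_1, \ldots, u_n$ forces an $\R_i$-edge between some two of them. The cases where $u_j \R_i u_k$ with $j < k$ contradict the $\overline{\phi}_j$ conjuncts in the antecedent, leaving the case $w \R_i u_k$, from which $[i]\phi_k$ at $u_k$ and symmetry of $\R_i$ yield $\phi_k$ at $w$.

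For completeness, the plan is a canonical model construction over maximal $\ldm$-consistent sets, with $\R_i$ defined canonically by $w\R_i v$ iff $\{\phi : [i]\phi \in w\} \subseteq v$. Because $\Box$ is semantically the universal modality rather than a relational one, I would restrict to a single generated submodel obtained from a given maximal consistent set $w_0$ via the auxiliary relation $w\R_\Box v$ iff $\{\phi : \Box\phi \in w\} \subseteq v$; the $\mathsf{S5}$ principles for $\Box$ together with $\Box\phi \vee \Diamond\overline{\phi}$ make $\R_\Box$ universal on this submodel, so restricting the valuation to this submodel correctly interprets $\Box$ as quantification over its domain. A standard truth lemma then establishes that each maximal consistent set in the submodel realizes exactly the formulas it contains, and (C1) follows from the $\mathsf{S5}[i]$ axioms in the usual way.

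The main obstacle will be verifying (C2) and (C3) on the canonical submodel, since these rely on the more delicate axioms $\ioa$ and $\apc$. For (C2), given $u_1, \ldots, u_m$ in the submodel, I would show that $\bigcup_i \{\phi : [i]\phi \in u_i\}$ together with the $\Box$-box formulas shared across the submodel is $\ldm$-consistent by repeatedly applying $\ioa$ in conjunction with the $\mathsf{S5}$ principles for $\Box$ and $[i]$, and then extend via Lindenbaum to a maximal consistent $v \in \bigcap_i \R_i(u_i)$. For (C3), given $w_0, \ldots, w_n$ in the submodel, I would assume for contradiction that no $\R_i$-edge exists among them, construct characteristic formulas witnessing pairwise $\R_i$-distinctness, and obtain a contradiction by feeding them into $\apc$. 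Since both directions have already been established in the cited literature \cite{HerSch08,Xu94-2}, I would appeal to those sources for the full technical details rather than rederive them here.
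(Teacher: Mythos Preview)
The paper does not prove this theorem at all: it is stated with citations to \cite{HerSch08,Xu94-2} and treated as an established result from the literature, so there is no proof in the paper to compare against. Your final paragraph, appealing to those sources rather than rederiving the details, is therefore exactly what the paper does.

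That said, your sketch of how such a proof would go is correct and is the standard route taken in the cited works. One small remark on the soundness case for $\apc$: your argument is right, but note that (C3) as stated in Def.~\ref{def:models-and-frames} only guarantees $\R_i w_k w_j$ with $k<j$, so the case split is exactly as you describe (either $k=0$, giving $\R_i w u_j$ and hence $\phi_j$ at $w$ via symmetry, or $1\le k<j$, yielding the contradiction with $\overline{\phi}_k$ at $u_j$). For completeness, the generated-submodel move to make $\Box$ universal is the right adaptation for the relational semantics of \cite{HerSch08}, and the verifications of (C2) and (C3) from $\ioa$ and $\apc$ are indeed the substantive steps; your outline matches what is done in \cite{Xu94-2} (for (C3)) and \cite{HerSch08} (for (C2)).
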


%\begin{proof}
%See \cite{HerSch08,Xu94-2}. 
%\end{proof}
    
\section{Refinement of the Calculi $\ldml$}\label{sec_ldml}

    In this section, we introduce the labelled calculi $\ldml$ for multi-agent $\stit$ logics (with limited choice). Our calculi are modified, extended versions of the labelled calculi for the logics $\mathsf{Ldm}_{0}^{m}$ (with $m \in \mathbb{N}$) proposed in \cite{BerLyo19} and cover a larger class of logics. The calculi $\ldml$ possess fundamental proof-theoretic properties such as contraction- and cut-admissibility which %(see Thm. \ref{thm:g3ldm-properties}), that logically
  follow from the general results on labelled calculi established in~\cite{Neg05}.
 The main goal of this section is to refine the $\ldml$ calculi through the elimination of structural rules, resulting in new calculi $\calc$ that derive theorems within a restricted class of sequents. As a result of adopting the approach in \cite{HerSch08}, %Compared to%In contrast to
%~\cite{BerLyo19}, %the calculi presented here cover a larger class of logics. 
% Furthermore,  and additionally, 
 the omission of the relational structure corresponding to the $\Box$ modality %provides a simpler formalism for reasoning about atemporal choice-making, but also 
 offers a simpler %significant advantage in that it easily allows for us to 
 approach to proving the admissibility of structural rules in the the presence of propagation rules (Sec.~\ref{sec:refinement}). Let us start by introducing the class of $\ldml$ calculi.

%The refinement process employed is based on the \textit{internalization} method of the labelled calculus explained in~\cite{CiaLyoRamTiu19}. Throughout this section, the above notions will be made precise.

\subsection{The $\ldml$ Calculi}

We define labelled sequents $\Lambda$ via the following BNF grammar:
\begin{center}
$\Lambda ::= x:\phi \ | \ \Lambda, \Lambda \ | \ \R_{i}xy, \Lambda$
\end{center}
where $i \in Ag$, $\phi \in \mathcal{L}^{m}$ and $x,y$ are from a denumerable set of labels $Lab = \{ x, y, z, ... \}$. Labelled sequents consist exclusively of labelled formulae of the form $x:\phi$ and relational atoms of the form $\R_{i}xy$. For this reason, sequents can be partitioned into two parts: %Thus, 
we sometimes use the notation $\R,\Gamma$ to denote labelled sequents, where $\R$ is the part consisting of relational atoms and $\Gamma$ is the part consisting of labelled formulae. Last, we interpret the commas between relational atoms in $\R$ conjunctively, the comma between $\R$ and $\Gamma$ in $\R, \Gamma$ implicationally, and the commas between labelled formulae in $\Gamma$ disjunctively (cf. Def.~\ref{def:sequent-satisfaction-validity}).

%Our
The %initial 
labelled $\stit$ calculi $\ldml$ (where $n,m\in\mathbb{N}$) %, in the style of the labelled calculi presented in~\cite{BerLyo19}%,Neg05,Vig00}
%,
 are shown in Fig. 2. %\ref{fig:g3ldmn}.
  Note that for each agent $i \in Ag$, we obtain a copy for each of the rules $\stitdiar$, $\stitr$, $\stitrefl$, $\stiteucl$, and $(\mathsf{APC}^{i}_{n})$. We refer to $\stitrefl$, $\stiteucl$, $\ioa$, and $(\mathsf{APC}^{i}_{n})$ as the \emph{structural rules} of $\ldml$. %The rule $\stiteucl^{*}$ is present due to the technical reasons addressed in Rmk. \ref{rem:contraction}.
 The rule $\ioa$ captures the \textit{independence of agents} principle. Furthermore, the rule schema $\choicerule$, limiting the amount of choices available to agent $i$, provides different rules depending on the value of $n$ in $\ldml$ (we reserve %When 
$n=0$ %, we
 to assert that the rule does not appear). % in the calculus. %and when
 When $n > 0$, the $(\mathsf{APC}^{i}_{n})$ rule contains $n(n+1)/2$ premises, where each sequent $\R, \R_{i}x_{k}x_{j}, \Gamma$ (for $0 \leq k \leq n-1$ and $k+1 \leq j \leq n$) represents a different premise of the rule. As an example, for $n=1$ and $n=2$ the rules for agent $i$ are:

\begin{small}
\begin{center}
\begin{tabular}{%@{\hskip -1em}
 c @{\hskip 1em} c}
\AxiomC{$\R, \R_{i}w_{0}w_{1}, \Gamma$}
\RightLabel{$(\mathsf{APC}^{i}_{1})$}
\UnaryInfC{$\R,\Gamma$}
\DisplayProof

&
\def\defaultHypSeparation{\hskip .1in}
\AxiomC{$\R, \R_{i}w_{0}w_{1}, \Gamma$}
\AxiomC{$\R, \R_{i}w_{0}w_{2}, \Gamma$}
\AxiomC{$\R, \R_{i}w_{1}w_{2}, \Gamma$}
\RightLabel{$(\mathsf{APC}^{i}_{2})$}
\TrinaryInfC{$\R, \Gamma$}
\DisplayProof
\end{tabular}
\end{center}
\end{small}

\begin{figure}[t]\label{fig:g3ldmn}
\noindent\hrule
\begin{center}
\begin{tabular}{c c}

\AxiomC{ }
\RightLabel{$\id$}
\UnaryInfC{$\R, w:p, w:\overline{p}, \Gamma$}
\DisplayProof

&

\AxiomC{$\R, w: \phi \wedge \psi, w: \phi, \Gamma$}
\AxiomC{$\R, w: \phi \wedge \psi, w: \psi, \Gamma$}
\RightLabel{$\conr$}
\BinaryInfC{$\R, w: \phi \wedge \psi, \Gamma$}
\DisplayProof

\end{tabular}
\end{center}

\begin{center}
\begin{tabular}{c c}
\AxiomC{$\R, w: \phi \vee \psi, w: \phi, w : \psi, \Gamma$}
\RightLabel{$\disr$}
\UnaryInfC{$\R, w: \phi \vee \psi, \Gamma$}
\DisplayProof

&

\AxiomC{$\R, \R_{i}wv, v: \phi, \Gamma$}
\RightLabel{$\stitr^{\dag}$}
\UnaryInfC{$\R, w: [i] \phi, \Gamma$}
\DisplayProof

\end{tabular}
\end{center}

\begin{center}
\begin{tabular}{c c c}

\AxiomC{$\R, w: \Box \phi, v: \phi, \Gamma$}
\RightLabel{$\settr^{\dag}$}
\UnaryInfC{$\R, w: \Box \phi, \Gamma$}
\DisplayProof

&

\AxiomC{$\R, w: \Diamond \phi, u: \phi, \Gamma$}
\RightLabel{$\settdiar$}
\UnaryInfC{$\R, w: \Diamond \phi, \Gamma$}
\DisplayProof

&

\AxiomC{$\R,\R_{1}u_{1}v, ..., \R_{m}u_{m}v, \Gamma$}
\RightLabel{$\ioa^{\dag}$}
\UnaryInfC{$\R,\Gamma$}
\DisplayProof

\end{tabular}
\end{center}

\begin{center}
\begin{tabular}{c c c}

\AxiomC{$\R, \R_{i}wu, w: \lb i \rb \phi, u: \phi, \Gamma$}
\RightLabel{$\stitdiar$}
\UnaryInfC{$\R, \R_{i}wu, w: \lb i \rb \phi, \Gamma$}
\DisplayProof

&

\AxiomC{$\R, \R_{i}ww, \Gamma$}
\RightLabel{$\stitrefl$}
\UnaryInfC{$\R, \Gamma$}
\DisplayProof

&

\AxiomC{$\R, \R_{i}wu, \R_{i}wv, \R_{i}uv, \Gamma$}
\RightLabel{$\stiteucl$}
\UnaryInfC{$\R, \R_{i}wu, \R_{i}wv, \Gamma$}
\DisplayProof

\end{tabular}
\end{center}

\begin{center}
\begin{tabular}{c}
\AxiomC{$\Big\{ \R, \R_{i}w_{k}w_{j}, \Gamma \ \Big| \ 0 \leq k \leq n-1 \text{, } k+1 \leq j \leq n \Big\}$}
\RightLabel{$\choicerule$}
\UnaryInfC{$\R,\Gamma$}
\DisplayProof
\end{tabular}
\end{center}
\hrule
\caption{The $\ldml$ labelled calculi. The superscript $\dag$ on the $\settr$, $\stitr$, and $\ioa$ rule names indicates an eigenvariable condition: the variable $v$ occurring in the premise of the rule cannot occur in the context of the premise (or, equivalently, in the conclusion).}
\end{figure}

%\vspace{-6pt}

\begin{theorem}\label{thm:g3ldm-properties} The $\ldml$ calculi have the following properties:

\begin{enumerate}

\item  All sequents of the form $\R, w : \phi, w: \overline{\phi}, \Gamma$ are derivable;

\item Variable-substitution is height-preserving admissible;

\item All inference rules are height-preserving invertible;

\item Weakening and contractions are height-preserving admissible:

\begin{center}
\begin{tabular}{c @{\hskip 1em} c @{\hskip 1em} c}
\AxiomC{$\R,\Gamma$}
\RightLabel{$\wk$}
\UnaryInfC{$\R,\R',\Gamma',\Gamma$}
\DisplayProof

&

\AxiomC{$\R,\R',\R',\Gamma$}
\RightLabel{$\ctr_{\mathsf{R}}$}
\UnaryInfC{$\R,\R',\Gamma$}
\DisplayProof

&

\AxiomC{$\R,\Gamma',\Gamma',\Gamma$}
\RightLabel{$\ctr_{\mathsf{F}}$}
\UnaryInfC{$\R,\Gamma',\Gamma$}
\DisplayProof
\end{tabular}
\end{center}

\item The cut rule is admissible:

\begin{center}
\begin{tabular}{c}
\AxiomC{$\R,x:\phi,\Gamma$}
\AxiomC{$\R,x:\overline{\phi},\Gamma$}
\RightLabel{$\cut$}
\BinaryInfC{$\R,\Gamma$}
\DisplayProof
\end{tabular}
\end{center}

\item For every formula $\phi \in \mathcal{L}^m$, $w: \phi$ is derivable in $\ldml$ if and only if $\vdash_{\ldm} \phi$, i.e., $\ldml$ is sound and complete relative to $ \ldm$.

\end{enumerate}
\end{theorem}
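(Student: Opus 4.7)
The plan is to follow the standard Negri-style methodology for labelled sequent calculi~\cite{Neg05}, proving the six items in the stated order so that each can build on the previous ones. The four structural rules $\stitrefl$, $\stiteucl$, $\ioa$, and $\choicerule$ all fit (a natural extension of) the geometric rule schema of~\cite{Neg05}, so the general admissibility results transfer with only minor adjustments for our specific rule set.

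For part 1, I induct on the complexity of $\phi$. The base case is an instance of $\id$. For the connective cases, apply the matching rule to both $w{:}\phi$ and $w{:}\overline{\phi}$: for example, if $\phi = \psi \wedge \chi$ then $\overline{\phi} = \overline{\psi} \vee \overline{\chi}$, and one application each of $\conr$ and $\disr$ reduces to two instances of the IH. For the modal cases, first apply the box rule to produce a fresh label and relational atom, then apply the matching diamond rule to discharge the witness formula (in the $[i]$ case we first use $\stitrefl$ so the diamond can be discharged at the same label), and conclude by IH. Parts 2--4 are each a straightforward induction on derivation height. Height-preserving variable substitution requires only the usual care of renaming eigenvariables in $\settr$, $\stitr$, and $\ioa$ when substitution would violate their side conditions. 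Height-preserving invertibility then follows uniformly: for rules whose principal formula persists in the premise, invertibility is immediate by admissible weakening; for eigenvariable rules, by substitution combined with weakening. Weakening itself is a direct induction, and both contractions $\ctr_{\mathsf{R}}$ and $\ctr_{\mathsf{F}}$ are proved by induction on height using invertibility at each step.

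Part 5 will be the main technical obstacle. I would use the standard double induction: primary on the complexity of the cut formula, secondary on the sum of the heights of the two cut premises. The cases split on the last rule applied to each premise. When the cut formula is principal on both sides, we reduce to cuts on proper subformulas; in the modal principal cases this reduction uses height-preserving substitution to instantiate the eigenvariable produced by the box rule to the witness label supplied by the diamond rule. When the cut formula is not principal in one premise, we permute the cut upward past that rule, renaming eigenvariables via substitution as needed to avoid clashes with labels in the other premise; rules that retain the cut formula in their premises (such as $\settr$ or $\stitdiar$) are handled by appealing to admissible contraction after the permutation. The structural rules $\stitrefl$, $\stiteucl$, $\ioa$, and $\choicerule$ only affect the relational part and never make the cut formula principal, so cut permutes past them transparently.

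For part 6, soundness is proved by induction on derivation height after extending Def.~\ref{Semantics_ldm} to labelled sequents: fix an $\ldm$-model $M$ and an interpretation $I$ from labels to worlds; then $\R,\Gamma$ is satisfied by $(M,I)$ iff whenever every relational atom of $\R$ holds under $I$, at least one labelled formula of $\Gamma$ is true under $I$. Soundness of each rule then matches a frame condition: $\stitrefl$ and $\stiteucl$ together force each $\R_{i}$ to be an equivalence relation (\textbf{(C1)}), $\ioa$ corresponds exactly to \textbf{(C2)} (using the freshness of $v$ to extend $I$ to the witness world), and $\choicerule$ corresponds directly to \textbf{(C3)}. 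Completeness is obtained by deriving each Hilbert axiom of Fig.~\ref{fig:hilbert_ldm} in $\ldml$ and simulating modus ponens via admissible cut (part 5) and $\Box$-necessitation via fresh-label substitution followed by $\settr$; the $\mathsf{S5}$, bridge, and $\ioa$ axioms are routine, while the $\apc$ axiom is derived by a direct application of $\choicerule$, whose premise structure is tailored to match it.
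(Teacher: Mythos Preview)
Your proposal is correct and follows essentially the same Negri-style methodology as the paper: induction on formula complexity for (1), induction on height for (2)--(4), the standard double induction for (5), and soundness via the sequent interpretation plus completeness by deriving the Hilbert axioms and simulating the rules. Two small remarks: in part~1 the use of $\stitrefl$ in the $[i]$ case is unnecessary, since the $\stitr$ rule already introduces the relational atom $\R_{i}wv$ that the $\stitdiar$ rule needs; and the derivation of the $\apc$ axiom is not quite ``direct''---the paper's derivation shows it requires, beyond the $\choicerule$ rule itself, several applications of $\stitrefl$ and $\stiteucl$ (in particular to obtain an admissible symmetry rule) to discharge the premises.
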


\begin{proof} The proof is a basic adaption of \cite{Neg05} and can be found in App.~\ref{appendix}.
\qed
\end{proof}

Proof-theoretic properties like %such as
 those expressed in (4) and (5) of Thm. \ref{thm:g3ldm-properties} are essential %helpful
 when designing decidability procedures via proof-search. In constructing a proof of a sequent, proof-search algorithms proceed by applying inference rules of a calculus bottom-up. A bottom-up application of the $\cut$ rule in a proof-search procedure, however, requires one to guess the \emph{cut formula} $\phi$, and thus risks non-termination in the algorithm. (One can think of similar arguments %that demonstrate 
 why %the rules 
 $\ctr_{\mathsf{R}}$ and $\ctr_{\mathsf{F}}$ risk non-termination.) It is thus crucial that such rules are \emph{admissible}; \textit{i.e.} everything derivable with these rules, is derivable without them. %se rules.

\begin{remark}\label{rem:contraction} To obtain contraction %, \textit{i.e.} $\ctr_{\mathsf{R}}$,
 admissibility (Thm. \ref{thm:g3ldm-properties}-(4)) labelled calculi must satisfy the \emph{closure condition}~\cite{Neg05}: if a substitution of variables in a structural rule brings about a duplication of relational atoms in the conclusion, then the calculus must contain another instance of the rule with this duplication contracted. 
 
 We observe that if we substitute the variable $u$ for $v$ in the structural rule $\stiteucl$ (below left), we obtain the rule $\stiteucl^{*}$ (below right), when the atom $\R_{i}wu$ is contracted:
\begin{center}
\begin{tabular}{c @{\hskip 2em} c}
\AxiomC{$\R, \R_{i}wu, \R_{i}wu, \R_{i}uu, \Gamma$}
\RightLabel{$\stiteucl$}
\UnaryInfC{$\R, \R_{i}wu, \R_{i}wu, \Gamma$}
\DisplayProof

&

\AxiomC{$\R, \R_{i}wu, \R_{i}uu, \Gamma$}
\RightLabel{$\stiteucl^{*}$}
\UnaryInfC{$\R, \R_{i}wu, \Gamma$}
\DisplayProof
\end{tabular}
\end{center}
Thus, following the closure condition, we must also add $\stiteucl^{*}$ to our calculus. However, $\stiteucl^{*}$ is a special instance of the $\stitrefl$ rule, and hence it is admissible; therefore, we can omit its inclusion in our calculi. None of the other structural rules possess duplicate relational atoms in their conclusions under a substitution of variables, and so, each $\ldml$ calculus satisfies the closure condition.
\end{remark}

\subsection{Extracting the $\calc$ Calculi}\label{sec:refinement}

%In this section, 
We now refine the $\ldml$ calculi, extracting new
  $\calc$ calculi to which proof-search techniques from~\cite{TiuIanGor12} may be adapted. In short, we introduce new rules to our calculi, called \emph{propagation rules}, which are well-suited %suitable
  for proof-search and imply the admissibility of the less suitable structural rules $\stitrefl$ and $\stiteucl$. %, which are not well-suited for proof-search.
   
Propagation rules are special sequent rules that possess a nonstandard side condition, consisting of two components. For the first component (1), we transform the sequent occurring in the premise of the rule into an \emph{automaton}. The labels appearing in the sequent determine the states of the automaton, whereas the relational atoms of the sequent determine the transitions between these states. The following definition, based on %based off of
\cite[Def. 4.1]{TiuIanGor12}, makes this notion precise:

\begin{definition}[Propagation Automaton]\label{def:propagation-automaton} Let $\Lambda$ % = \R,\Gamma$ 
 be a labelled sequent, $Lab(\Lambda)$ be the set of labels occurring in $\Lambda$, and $w, u \in Lab(\Lambda)$. We define a \emph{propagation automaton} $\mathcal{P}_{\Lambda}(w,u)$ to be the tuple $(\Sigma,S,I,F,\delta)$ s.t. (i) $\Sigma := \{\agdia \ | \ i \in Ag\}$ is the automaton's \emph{alphabet}, (ii) $S := Lab(\Lambda)$ is the \emph{set of states}, (iii) $I := \{w\}$ is the \emph{initial state}, (iv) $F := \{u\}$ is the \emph{accepting state}, and (v) $\delta : S \times \Sigma \to S$ is the \emph{transition function} where $\delta(v,\agdia) = v'$ and $\delta(v',\agdia) = v$ iff $\R_{i}vv' \in \Lambda$. 

We will often write $v \overset{\agdia}{\longrightarrow} v'$ instead of $\delta(v,\agdia) = v'$ to denote
 a transition between states. A \emph{string} is a, possibly empty, concatenation of symbols from $\Sigma$ (where $\varepsilon$ indicates the empty string). We say that an automaton \emph{accepts} a string $\omega = \langle i_{1} \rangle \langle i_{2} \rangle \cdots \langle i_{k} \rangle$ iff there exists a transition sequence %of transitions
  $w \overset{\langle i_{1} \rangle}{\longrightarrow} v \overset{\langle i_{2} \rangle}{\longrightarrow} \cdots \overset{\langle i_{k} \rangle}{\longrightarrow} u$ from the initial state $w$ to the accepting state $u$. 
  Last, we will abuse notation and use $\mathcal{P}_{\Lambda}(w,u)$ equivocally to represent both the automaton and the set of strings $\omega$ accepted by the automaton, \textit{i.e.} $\{\omega \ | \ \mathcal{P}_{\Lambda}(w,u) \text{ accepts string } \omega \}$. % accepted by the automaton, \textit{i.e.}, the set of accepted strings. 
The use of notation can be determined from the context.

\end{definition}

The second component (2) of the rule's side condition %determines %which strings ensure a valid application of the propagation rule. That is, it 
 restricts the application of the rule to a particular language that specifies and determines which types of strings occurring in the automaton allow for a correct application of the propagation rule. %appropriate/valid.
%the use of a language characterizing which strings (\textit{i.e.} paths of relational atoms) ensure a valid application of the propagation rule.
 %For every agent $i$, w
 We define this %application
  language accordingly:

\begin{definition}[Agent $i$ %$i^{th}$
 Application Language]\label{def:appl_lang} For each $i \in Ag$, we define %agent $i$'s \emph{%$i^{th}$
 the \emph{application language} $L_{i}$ to be the language generated from the regular expression $\langle i \rangle^{*}$, that is, $L_{i} = \{\varepsilon, \agdia, \agdia \agdia, \agdia \agdia \agdia, \cdots \}$ with $\varepsilon$ the empty string.\footnote{For further information on regular languages and expressions, consult~\cite{Sip06}.} 
 % Note that the $*$ operation is the familiar Kleene star operation.}
\end{definition}

%Bringing the two components together,
Bringing components (1) and (2) together, a propagation rule is applicable only if the associated propagation automaton accepts a certain string---corresponding to a path of relational atoms in the premise of the rule---and the string is in the application language.

\begin{definition}[Propagation Rule]\label{def:propagation-rules} Let $i \in Ag$, $\Lambda_{1} = \R, w: \agdia \phi, u:\phi, \Gamma$, and $\Lambda_{2} = \R, w: \agdia \phi, \Gamma$. The \emph{propagation rule} $\propag$ %for the $i^{th}$ agent
is defined as follows:

\begin{center}
\begin{tabular}{c}
\AxiomC{$\R, w: \agdia \phi, u:\phi, \Gamma$}
\RightLabel{$\propag^{\dag\dag}$}
\UnaryInfC{$\R, w: \agdia \phi, \Gamma$}
\DisplayProof
\end{tabular}
\end{center}
The superscript $\dag\dag$ indicates that $\mathcal{P}_{\Lambda_{k}}(w,u) \cap L_{i} \neq \emptyset$ for $k\in\{1,2\}$.\footnote{Observe that $\mathcal{P}_{\Lambda_{1}}(w,u) {=} \mathcal{P}_{\Lambda_{2}}(w,u)$. Hence, deciding %our choice of 
 which automaton to employ in determining the side condition is inconsequential: when applying the rule top-down we may consult $\Lambda_1$, whereas during bottom-up proof-search we may regard $\Lambda_2$.}

We use $\mathsf{PR} := \{\propag \ | \ i \in Ag\}$ to represent the set of all propagation rules. % for all $i\in Ag$.
\end{definition}

The underlying intuition of the rule (applied bottom-up) is that, given some labelled sequent $\Lambda$, a formula $\phi$ is propagated from $w: \agdia\phi$ to another label $u$,
% should only be propagated from a label $w$ to another label $u$, 
 if $w$ and $u$ are connected %solely
 by a sequence of $\R_{i}$ relational atoms in $\Lambda$ (with $i$ fixed). In the corresponding propagation automaton $\mathcal{P}_{\Lambda}(w,u)$, this amounts to the existence of a string $\omega\in \mathcal{P}_{\Lambda}(w,u)\cap L_i$ which represents a sequence of transitions from $w$ to $u$, such that all transitions are solely labelled with $\agdia$. %The language $L_{i}$ formally ensures the soundness of the rule application by allowing only for strings of consecutive $\agdia$'s.
  To see how the language $L_i$ secures the soundness of the rule, we refer to Thm. \ref{thm:soundness-ldmnl}. For an introduction to propagation rules and propagation automata, see % we refer the reader to~
  \cite{TiuIanGor12}.

Let us make the introduced notions more concrete by providing an example:

\begin{example}\label{eg1:propagation-application} Let $\Lambda = \R_{1}wu, \R_{2}uv, \R_{1}vz, w : \langle 1 \rangle \phi$. The propagation automaton $\mathcal{P}_{\Lambda}(w,z)$ is depicted graphically as (where the single-boxed node $w$ designates the initial state and a double-boxed node $z$ represents the accepting state):
\begin{center}
\begin{tabular}{c}
\xymatrix{
\boxed{w}\ar@/^-1pc/@{.>}[rr]|-{\langle 1 \rangle} &   &   u\ar@/^-1pc/@{.>}[rr]|-{\langle 2 \rangle}\ar@/^-1pc/@{.>}[ll]|-{\langle 1 \rangle}   &  &   v \ar@/^-1pc/@{.>}[rr]|-{\langle 1 \rangle}\ar@/^-1pc/@{.>}[ll]|-{\langle 2 \rangle} &  &  \boxed{\boxed{z}}\ar@/^-1pc/@{.>}[ll]|-{\langle 1 \rangle}
}
\end{tabular}
\end{center}

%The automaton accepts strings of the form
%\begin{center}
%$(\langle 1 \rangle +(\varepsilon + \langle 1 \rangle (\langle 2 \rangle(\langle 1 \rangle \langle 1 \rangle)^{*}\langle 2 \rangle)^{*}\langle 1 \rangle)(\langle 1 \rangle(\langle 2 \rangle(\langle 1 \rangle \langle 1 \rangle)^{*}\langle 2 \rangle)^{*}\langle 1 \rangle)^{*}\langle 1 \rangle)(\langle 2 \rangle(\langle 1 \rangle \langle 1 \rangle)^{*}\langle 2 \rangle)^{*}\langle 2 \rangle(\langle 1 \rangle \langle 1 \rangle)^{*}\langle 1 \rangle$.
%\end{center}
%Courtesy of http://ivanzuzak.info/noam/webapps/fsm2regex/ with input: #states
%s0 s1 s2 s3 #initial s0 #accepting s3 #alphabet 1 2
%#transitions s0:1>s1 s1:1>s0 s1:2>s2 s2:2>s1 s2:1>s3 s3:1>s2
Observe that every string the automaton accepts must contain at least one $\langle 2 \rangle$ symbol. Since no string of this form exists in $L_{1}$, it is not valid to propagate the formula $\phi$ to $z$. That is, the sequent $\R_{1}wu, \R_{2}uv, \R_{1}vz, w : \langle 1 \rangle \phi, z:\phi$ does not follow from applying the propagation rule $(\mathsf{Pr}_1)$ (bottom-up) to $\Lambda$.

On the other hand, consider the propagation automaton $\mathcal{P}_{\Lambda}(w,u)$:
\begin{center}
\begin{tabular}{c}
\xymatrix{
\boxed{w}\ar@/^-1pc/@{.>}[rr]|-{\langle 1 \rangle} &   &   \boxed{\boxed{u}}\ar@/^-1pc/@{.>}[rr]|-{\langle 2 \rangle}\ar@/^-1pc/@{.>}[ll]|-{\langle 1 \rangle}   &  &   v \ar@/^-1pc/@{.>}[rr]|-{\langle 1 \rangle}\ar@/^-1pc/@{.>}[ll]|-{\langle 2 \rangle} &  &  z\ar@/^-1pc/@{.>}[ll]|-{\langle 1 \rangle}
}
\end{tabular}
\end{center}
The automaton accepts the simple string $\agdiaone$, which is included in the language $L_{1}$. Therefore, it is permissible to apply the propagation rule $(\mathsf{Pr}_1)$ (bottom-up) and derive $\R_{1}wu, \R_{2}uv, \R_{1}vz, w : \langle 1 \rangle \phi, u:\phi$ from $\Lambda$.

\end{example}

\begin{remark} We observe that both of the languages $\mathcal{P}_{\Lambda}(w,u)$ and $L_{i}$ are regular, and thus, the problem of determining whether $\mathcal{P}_{\Lambda}(w,u) \cap L_{i} \neq \emptyset$, %if their intersection is empty
 is decidable~\cite{TiuIanGor12}. Consequently, the propagation rules in $\mathsf{PR}$ may be integrated into our proof-search algorithm without risking non-termination. 
\end{remark}

\begin{figure}[t]\label{fig:ldmnl}
\noindent\hrule
\begin{center}
\begin{tabular}{c @{\hskip 2em} c}

\AxiomC{ }
\RightLabel{$\id$}
\UnaryInfC{$\R, w:p, w:\overline{p}, \Gamma$}
\DisplayProof

&

\AxiomC{$\R, w: \phi \wedge \psi, w: \phi, \Gamma$}
\AxiomC{$\R, w: \phi \wedge \psi, w: \psi, \Gamma$}
\RightLabel{$\conr$}
\BinaryInfC{$\R, w: \phi \wedge \psi, \Gamma$}
\DisplayProof

\end{tabular}
\end{center}

\begin{center}
\begin{tabular}{c @{\hskip 1em} c @{\hskip 1em} c}
\AxiomC{$\R, w: \phi \vee \psi, w: \phi, w : \psi, \Gamma$}
\RightLabel{$\disr$}
\UnaryInfC{$\R, w: \phi \vee \psi, \Gamma$}
\DisplayProof

&

\AxiomC{$\R, w: \Box \phi, v: \phi, \Gamma$}
\RightLabel{$\settr^{\dag}$}
\UnaryInfC{$\R, w: \Box \phi, \Gamma$}
\DisplayProof

&

\AxiomC{$\R, w: \Diamond \phi, u: \phi, \Gamma$}
\RightLabel{$\settdiar$}
\UnaryInfC{$\R, w: \Diamond \phi, \Gamma$}
\DisplayProof
\end{tabular}
\end{center}

\begin{center}
\begin{tabular}{c @{\hskip 1em} c}

\AxiomC{$\R,\R_{1}u_{1}v, ..., \R_{m}u_{m}v, \Gamma$}
\RightLabel{$\ioa^{\dag}$}
\UnaryInfC{$\R,\Gamma$}
\DisplayProof

&

\AxiomC{$\R, \R_{i}wv, w: [i] \phi, v: \phi, \Gamma$}
\RightLabel{$\stitr^{\dag}$}
\UnaryInfC{$\R, w: [i] \phi, \Gamma$}
\DisplayProof

\end{tabular}
\end{center}

\begin{center}
\begin{tabular}{c @{\hskip .5em} c}

\AxiomC{$\R, w: \agdia \phi, u:\phi, \Gamma$}
\RightLabel{$\propag^{\dag \dag}$}
\UnaryInfC{$\R, w: \agdia \phi, \Gamma$}
\DisplayProof

&

\AxiomC{$\Big\{ \R, \R_{i}w_{k}w_{j}, \Gamma \ \Big| \ 0 \leq k \leq n-1 \text{, } k+1 \leq j \leq n \Big\}$}
\RightLabel{$\choicerule$}
\UnaryInfC{$\R,\Gamma$}
\DisplayProof

\end{tabular}
\end{center}

\hrule
\caption{The labelled calculus $\calc$. The superscript $\dag$ on the $\settr$, $\stitr$, and $\ioa$ rules indicate that $v$ is an eigenvariable. The $\dag \dag$ side condition is the same as in Def. \ref{def:propagation-rules}. Last, we have $\stitr$, $\propag$, and $\choicerule$ rules for each $i \in Ag$.}
%\vspace{-10pt}
\end{figure}

The proof theoretic properties of $\ldml$ are preserved when extended with the set of propagation rules $\mathsf{PR}$ (Lem.~\ref{lm:calculus-properties}). Moreover, the nature of our propagation rules allows us to prove the admissibility of the structural rules $\stitrefl$ and $\stiteucl$, for each $i \in Ag$ (resp. Lem.~\ref{lm:refl-elimination} and \ref{lm:eucl-elimination}), which results  in the refined calculi $\calc$ (shown in Fig. 3). The proofs of Lem.~\ref{lm:calculus-properties} and \ref{lm:refl-elimination} are App.~\ref{appendix} (the latter is similar to the proof of Lem.~\ref{lm:eucl-elimination} presented here). % (available at \url{https://arxiv.org/abs/1908.11360}).

%The proofs of Lemmata \ref{lm:refl-elimination} and \ref{lm:eucl-elimination} below can be found in appendix \ref{appendix} and are proven by induction on the height of the given derivation. We explicitly mention Lem. \ref{lm:calculus-properties} since it is sufficient to prove Lem. \ref{lm:refl-elimination} and Lem. \ref{lm:eucl-elimination}.

\begin{lemma}\label{lm:calculus-properties} The $\ldml {+} \mathsf{PR}$ calculi have the following properties: (i) all sequents $\Lambda$ of the form $\Lambda = \R, w {:}\ \phi, w{:}\ \overline{\phi}, \Gamma$ are derivable; (ii) variable-substitution is height-preserving admissible; (iii) all inference rules are height-preserving invertible; (iv) the $\wk$, $\ctr_{\mathsf{R}}$ and $\ctr_{\mathsf{F}}$ rules are height-preserving admissible.

\end{lemma}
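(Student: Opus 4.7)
The plan is to piggyback on the proof of Theorem~\ref{thm:g3ldm-properties}, which establishes these properties for the unextended calculi $\ldml$. Since every derivation in $\ldml$ is also a derivation in $\ldml + \mathsf{PR}$, property (i) is immediate. Properties (ii), (iii), and (iv) each proceed by induction on the height of derivations, and the old inductive cases go through unchanged. We therefore only need to handle the new case in which the last rule applied is a propagation rule $\propag$.

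The key observation that makes this case work uniformly across (ii) and (iv) is a monotonicity property of the side condition for $\propag$: both variable substitution and weakening preserve acceptance of any witnessing string by the propagation automaton. Concretely, if $\omega \in \mathcal{P}_\Lambda(w,u) \cap L_i$ witnesses an application of $\propag$, then after a substitution $[y/x]$ the transitions of $\mathcal{P}_{\Lambda[y/x]}$ are precisely those of $\mathcal{P}_\Lambda$ with labels renamed (possibly identified), so the renamed version of $\omega$ still witnesses acceptance for the substituted automaton. Similarly, weakening can only add relational atoms, hence more transitions, so the original witnessing string is preserved. For contraction on relational atoms, any path traversing a duplicated atom has a counterpart through the retained copy. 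For (iii), invertibility of $\propag$ follows immediately from the height-preserving admissibility of weakening, since its premise differs from its conclusion only by the labelled formula $u:\phi$.

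With these observations in hand, the standard Negri-style arguments from \cite{Neg05} apply. For (ii), one dissects on the last rule, appeals to the inductive hypothesis on the premises, and reapplies the rule---the $\propag$ case re-verifying the side condition via the monotonicity argument above. For weakening in (iv), the same pattern works. For contraction in (iv), one inverts the last rule, contracts the premises by IH, and reapplies---the $\propag$ case going through because the side condition is insensitive to contractions of labelled formulae and is preserved under contractions of duplicate relational atoms by the monotonicity noted above. The closure condition of Remark~\ref{rem:contraction} remains unproblematic for $\propag$ since its application introduces no new relational atoms.

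The main obstacle will be the formal bookkeeping in the monotonicity arguments for the propagation automaton. Although intuitively clear, one must articulate precisely how an accepting $L_i$-witness in $\mathcal{P}_\Lambda(w,u)$ transforms to an accepting $L_i$-witness in $\mathcal{P}_{\Lambda'}(w',u')$ under each of substitution, weakening, and relational-atom contraction, checking in each case that the resulting string remains in the regular language $L_i = \{\agdia\}^{*}$. This is the genuinely new work, having no direct analogue in Theorem~\ref{thm:g3ldm-properties}, and it is what the proof must ultimately deliver on.
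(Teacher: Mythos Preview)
Your proposal is correct and follows essentially the same approach as the paper: reduce to Theorem~\ref{thm:g3ldm-properties} for the old rules, handle the new $\propag$ case by induction on height, and verify that the side condition (existence of an $L_i$-witness in the propagation automaton) survives substitution, weakening, and contraction. The paper's proof singles out the contraction case and argues, exactly as you do, that contracting duplicate relational atoms leaves a copy through which the witnessing transition sequence still runs; your more explicit framing of this as a monotonicity property of the automaton is a nice way to unify the three cases.
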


\begin{lemma}[$\stitrefl$-Elimination]\label{lm:refl-elimination} Every sequent $\Lambda$ derivable in $\ldml + \mathsf{PR}$ is derivable without the use of $\stitrefl$.
\end{lemma}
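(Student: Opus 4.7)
Plan: The approach is to iteratively eliminate topmost occurrences of $\stitrefl$ from derivations. Each elimination reduces to the following auxiliary claim, to be proved by a separate induction on derivation height: if $\R, \R_{i}ww, \Gamma$ has a $\stitrefl$-free derivation of height $h$ in $\ldml {+} \mathsf{PR}$, then $\R, \Gamma$ has a $\stitrefl$-free derivation of height at most $h$. Granted this claim, the main statement follows by induction on the number of $\stitrefl$ applications: pick a topmost instance going from $\R, \R_{i}ww, \Gamma$ to $\R, \Gamma$, apply the claim to its (now $\stitrefl$-free) sub-derivation, and splice the resulting derivation in place of the step, strictly reducing the $\stitrefl$-count.

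The auxiliary claim is established by induction on $h$. The base case is immediate: an instance of $\id$ remains an instance of $\id$ once the context atom $\R_{i}ww$ is dropped. For the inductive step, I would distinguish cases on the last rule. When $\R_{i}ww$ lies in the context of every premise and is not principal---which covers all rules except the principal subcases discussed below---I apply the induction hypothesis to each premise and re-apply the rule, noting that the eigenvariable conditions on $\stitr$, $\settr$, and $\ioa$ remain satisfied since dropping context atoms cannot introduce clashes. For $\propag$ I additionally verify that the side condition $\mathcal{P}_{\Lambda}(v,u) \cap L_{j} \neq \emptyset$ is preserved: any accepted path using the self-loop $w \overset{\langle i \rangle}{\longrightarrow} w$ contributed by $\R_{i}ww$ can be shortened to one that avoids it, and when $j \neq i$ the atom contributes no $\langle j \rangle$-transitions at all. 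For $\stitdiar$ with $\R_{i}ww$ as its principal atom, the rule's effect is simulated by $\propag$, using $\varepsilon \in \mathcal{P}_{\Lambda}(w,w) \cap L_{i}$. For $\stiteucl$ where $\R_{i}ww$ instantiates the first shared-source atom (i.e.\ $u = w$ in the rule template), the generated atom coincides with one already present, so height-preserving contraction on relational atoms together with the induction hypothesis suffices.

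The main obstacle is the remaining subcase of $\stiteucl$ in which $\R_{i}ww$ instantiates the second shared-source atom (i.e.\ $v = w$), so that the step derives the symmetric companion $\R_{i}uw$ from $\R_{i}wu$ and $\R_{i}ww$:
\begin{center}
\AxiomC{$\R, \R_{i}wu, \R_{i}ww, \R_{i}uw, \Gamma'$}
\RightLabel{$\stiteucl$}
\UnaryInfC{$\R, \R_{i}wu, \R_{i}ww, \Gamma'$}
\DisplayProof
\end{center}
Applying the induction hypothesis to the premise yields a $\stitrefl$-free derivation of $\R, \R_{i}wu, \R_{i}uw, \Gamma'$, whereas the target is $\R, \R_{i}wu, \Gamma'$; the residual atom $\R_{i}uw$ is not of the form $\R_{j}xx$ and weakening runs in the wrong direction. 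The intended resolution is to strengthen the inductive claim so that such ``symmetric shadow'' atoms can simultaneously be eliminated whenever their partner is retained, exploiting the fact that $\R_{i}uw$ and $\R_{i}wu$ contribute identical bidirectional transitions to the automaton of Def.~\ref{def:propagation-automaton}; thus any use of $\R_{i}uw$ by $\propag$ can be re-routed through $\R_{i}wu$. Applications of $\stitdiar$ via $\R_{i}uw$ are likewise simulated by $\propag$ on the symmetric automaton, and $\stiteucl$ applications in which $\R_{i}uw$ serves as a shared source must be reorganized by a careful permutation. This is the delicate step that parallels the treatment of Lem.~\ref{lm:eucl-elimination}, which the text flags as the template.
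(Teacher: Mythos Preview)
Your overall strategy coincides with the paper's: induct on height and permute the topmost $\stitrefl$ upward, handling $\id$ trivially, routing $\stitdiar$ through $\propag$ via $\varepsilon$, and shortening self-loop transitions in the $\propag$ case. So far the proposals agree.

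Where you diverge is in your treatment of $\stiteucl$. You explicitly distinguish the instantiation $u=w$ (where the generated Euclidean atom is a duplicate and contraction suffices) from the instantiation $v=w$ (where the premise contains the asymmetric pair $\R_{i}wu,\R_{i}uw$), and you flag the latter as the genuine obstacle. You are right to do so, and in fact you are being more careful than the paper here: the appendix proof displays only the degenerate all-equal sub-case and the $u=w$ sub-case, and does not address the $v=w$ instantiation at all. So the difficulty you isolate is not an artefact of your formulation; it is a case the paper's written argument simply omits.

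Your proposed repair---strengthening the inductive hypothesis so that, in addition to reflexive atoms $\R_{i}xx$, one may simultaneously drop any ``symmetric shadow'' $\R_{i}yx$ whenever $\R_{i}xy$ is retained---is the natural fix and does go through. The key observations you already note (that $\R_{i}yx$ and $\R_{i}xy$ induce identical transitions in the automaton of Def.~\ref{def:propagation-automaton}, and that $\stitdiar$ along the shadow is subsumed by $\propag$) handle every rule except nested $\stiteucl$. For that remaining case you should check that when a shadow atom is principal in an inner $\stiteucl$, the freshly generated Euclidean atom is again either a duplicate, a reflexive atom, or a shadow of something retained, so that the strengthened hypothesis continues to apply; this closure holds and is the content of your remark that ``$\stiteucl$ applications in which $\R_{i}uw$ serves as a shared source must be reorganized by a careful permutation.'' An alternative that sidesteps the bookkeeping is to eliminate $\stitrefl$ and $\stiteucl$ simultaneously in a single induction, since once $\stiteucl$ is absent the problematic sub-case cannot arise.
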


%\begin{proof} The proof is similar to that of Lem. \ref{lm:eucl-elimination} and can be found in Appendix \ref{appendix}. %The proof is similar to Lem. \ref{lm:eucl-elimination}. 
%\end{proof}

%\begin{proof} By induction on the height of the given derivation; details are given in appendix \ref{appendix}.%\end{proof}

\begin{lemma}[$\stiteucl$-Elimination]\label{lm:eucl-elimination} Every sequent $\Lambda$ derivable in $\ldml + \mathsf{PR}$ is derivable without the use of $\stiteucl$.

\end{lemma}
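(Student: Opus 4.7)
The plan is to proceed by induction on the height $h$ of the given derivation $\mathcal{D}$ of $\Lambda$ in $\ldml + \mathsf{PR}$. If the last rule applied in $\mathcal{D}$ is not $\stiteucl$, I invoke the outer IH on each premise and reapply the rule. The only non-trivial case is when $\mathcal{D}$ ends with an application of $\stiteucl$: by the outer IH, the premise $\R, \R_{i}wu, \R_{i}wv, \R_{i}uv, \Gamma$ already admits a $\stiteucl$-free derivation $\mathcal{D}'$, and what remains is to produce a $\stiteucl$-free derivation of the conclusion $\R, \R_{i}wu, \R_{i}wv, \Gamma$. This is established via an auxiliary claim: whenever $\R, \R_{i}wu, \R_{i}wv, \R_{i}uv, \Gamma$ is $\stiteucl$-freely derivable, so is the sequent $\R, \R_{i}wu, \R_{i}wv, \Gamma$ obtained by deleting the atom $\R_{i}uv$.

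The auxiliary claim is itself proved by a secondary induction on the height of $\mathcal{D}'$. For every rule other than $\stitdiar$ and $\propag$, the deletion of $\R_{i}uv$ does not affect applicability: rules such as $\conr$, $\disr$, $\settr$, $\settdiar$, $\stitr$, $\ioa$, $\choicerule$, and $\stitrefl$ either act on labelled formulae only or merely introduce relational atoms, and the eigenvariable conditions of $\settr$, $\stitr$, and $\ioa$ remain satisfied since $w$, $u$, $v$ all occur in the conclusion and are therefore distinct from any fresh variable. In all such cases, I apply the secondary IH to each premise and reapply the rule.

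The genuinely interesting case involves $\stitdiar$ and $\propag$ for some agent $j$. If $j \neq i$ the $\R_i$-atom plays no role: $\stitdiar$ for agent $j$ requires an $\R_j$-atom distinct from $\R_{i}uv$, and an accepting string for the $\propag$ side condition consists only of $\langle j \rangle$-transitions and thus never traverses $\R_{i}uv$. If $j = i$, the only subtle sub-case is when the matched atom of $\stitdiar$ is exactly $\R_{i}uv$, or when the accepting string of $\propag$ traverses the $\R_{i}uv$-induced transition (in either direction, owing to the bidirectional transition clause of Def.~\ref{def:propagation-automaton}). In the former I replace the $\stitdiar$ step by a $\propag$ step whose automaton accepts the string $\langle i \rangle \langle i \rangle$ via the detour $u \overset{\langle i \rangle}{\longrightarrow} w \overset{\langle i \rangle}{\longrightarrow} v$ supported by the still-present $\R_{i}wu$ and $\R_{i}wv$. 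In the latter I locally replace each traversal $u \overset{\langle i \rangle}{\longrightarrow} v$ (and symmetrically its reverse) within the accepting string by this same two-step detour; the rerouted string remains in $L_{i} = \langle i \rangle^{*}$, so the side condition survives the deletion, and the secondary IH applied to the premise lets me reapply $\propag$.

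The main obstacle is verifying that propagation side conditions are preserved after deletion of $\R_{i}uv$, and this reduces to the elementary observation that $L_{i}$ is closed under insertion of $\langle i \rangle$-segments, so any bidirectional $\R_i$-path through the $uv$-transition can be diverted through $w$ without leaving $L_i$. Once the auxiliary claim is secured, the primary induction closes and the lemma follows.
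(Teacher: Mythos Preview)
Your proposal is correct and follows essentially the same approach as the paper: both arguments hinge on showing that the atom $\R_{i}uv$ introduced by $\stiteucl$ can be discharged by rerouting any $\agdia$-path through $w$ via $\R_{i}wu$ and $\R_{i}wv$, which is precisely the detour you describe for the $\stitdiar$ and $\propag$ cases. The only cosmetic differences are that the paper phrases the argument as permuting the \emph{topmost} $\stiteucl$ upward (rather than your explicit outer-plus-auxiliary double induction) and first invokes Lemma~\ref{lm:refl-elimination} to dispose of $\stitrefl$, whereas you correctly treat $\stitrefl$ as just another trivial case in the secondary induction.
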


%%%%%%%%%%%%%%%%%%%%%%%%%%%%%%%%%%%%%%%%%%%%%%%%%%%%%%%%%%%%%%%%%%
%%%%%%%%%%%%%%%%%%%%%%%%%%%%%%%%%%%%%%%%%%%%%%%%%%%%%%%%%%%%%%%%%%
%%%%%%%%%%%%%%%%%Include if possible%%%%%%%%%%%%%%%%%%%%%%%%%%%%%%
%\begin{comment}
\begin{proof} The result is proven by induction on the height of the given derivation. We show that the topmost instance of a $\stiteucl$ rule can be permuted upward in a derivation until it is eliminated entirely; by successively eliminating each $\stiteucl$ inference from the derivation, we obtain a derivation free of such inferences. Also, we evoke Lem.~\ref{lm:refl-elimination} and assume that all instances of $\stitrefl$ have been eliminated from the given derivation. %The base case is easily verified---applying $\stiteucl$ to an initial sequent yields an initial sequent. Let us consider the inductive step:

%%%%%%%%%%%%%%%%%%%%%%%%%%%%%%%%%%%%%%%%%%%%%%%%%%%%%%%%%%%%%%
%%%It is unfortunate, but this must be cut out due to space constraints :'(%%%%%%%%%%%%%%%%%%%%%%%%%%%%%%%%%%%%%%%%%%%%%%%
%\begin{comment}
\textit{Base Case.} An application of $\stiteucl$ on an initial sequent (below left) can be re-written as an instance of the $\id$ rule (below right).

%\vspace{-5pt}
\begin{center}
\begin{tabular}{c @{\hskip 1 em} c}
\AxiomC{$\R, \R_{i}wu,\R_{i}wv,\R_{i}uv, z:p, z:\overline{p}, \Gamma$}
\RightLabel{$\stiteucl$}
\UnaryInfC{$\R,\R_{i}wu,\R_{i}wv, z:p, z:\overline{p}, \Gamma$}
\DisplayProof

&

\AxiomC{}
\RightLabel{$\id$}
\UnaryInfC{$\R,\R_{i}wu,\R_{i}wv, z:p, z:\overline{p}, \Gamma$}
\DisplayProof
\end{tabular}
\end{center}
%\vspace{3pt}
%\end{comment}
%%%%%%%%%%%%%%%%%%%%%%%%%%%%%%%%%%%%%%%%%%%%%%%%%%%%%%%%%%%%%%%%%
\medskip

\textit{Inductive step.} We show the inductive step for the non-trivial cases: $\stitdiar$ and $\propag$ (case (i) and (ii), respectively). All other cases are resolved by applying IH to the premise followed by an application of the corresponding rule.

\textit{(i).} Let %the %relational atom
 $\R_{i}uv$ be active in the $\stitdiar$ inference of the initial derivation (below (1)). Observe
  that when we apply the $\stiteucl$ rule first (below (2)), %as in the bottom derivation)
 the %relational
  atom $\R_{i}uv$ is no longer present in $\Lambda = \R,\R_{i}wu,\R_{i}wv, u : \agdia \phi, v : \phi, \Gamma$, and so, the $\stitdiar$ rule is not necessarily applicable. Nevertheless, we may apply the $\propag$ rule to derive the desired conclusion since $\agdia \agdia \in \mathcal{P}_{\Lambda}(u,v) \cap L_{i}$. Namely, 
 the fact that $\agdia \agdia \in \mathcal{P}_{\Lambda}(u,v)$ only relies on the presence of $\R_{i}wu,\R_{i}wv$ in $\Lambda$.

\begin{equation}
\begin{tabular}{c c}
\AxiomC{$\R,\R_{i}wu,\R_{i}wv,\R_{i}uv, u : \agdia \phi, v : \phi, \Gamma$}
\RightLabel{$(\langle i \rangle)$}
\UnaryInfC{$\R,\R_{i}wu,\R_{i}wv,\R_{i}uv, u : \agdia \phi, \Gamma$}
\RightLabel{$\stiteucl$}
\UnaryInfC{$\R,\R_{i}wu,\R_{i}wv, u : \agdia \phi, \Gamma$}
\DisplayProof
\end{tabular}
\end{equation}
\vspace{5pt}
\begin{equation}
\begin{tabular}{c c}
\AxiomC{$\R,\R_{i}wu,\R_{i}wv,\R_{i}uv, u : \agdia \phi, v : \phi, \Gamma$}
\RightLabel{$\stiteucl$}
\UnaryInfC{$\R,\R_{i}wu,\R_{i}wv, u : \agdia \phi, v : \phi, \Gamma$}
\RightLabel{$\propag$}
\UnaryInfC{$\R,\R_{i}wu,\R_{i}wv, u : \agdia \phi, \Gamma$}
\DisplayProof
\end{tabular}
\end{equation}
\vspace{2pt}

\textit{(ii).} Let $\Lambda_{1}$ be the first premise $\R,\R_{i}wu,\R_{i}wv,\R_{i}uv,  x:\agdia \phi, y:\phi, \Gamma$ of the initial derivation (below (3)). In the $\propag$ inference of the top derivation, we assume that $\R_{i}uv$ is active, that is, the side condition of $\propag$ is satisfied because some string $\agdia^{n} \in \mathcal{P}_{\Lambda_{1}}(x,y) \cap L_{i}$ with $n \in \mathbb{N}$. (NB. For the non-trivial case, we assume that $\agdia^{n} \in \mathcal{P}_{\Lambda_{1}}(x,y)$ relies on the presence of $\R_{i}uv \in \Lambda_{1}$, that is, the automaton $\mathcal{P}_{\Lambda_{1}}(x,y)$ makes use of transitions $u \overset{\agdia}{\longrightarrow} v$ or $v \overset{\agdia}{\longrightarrow} u$ defined relative to $\R_{i}uv$.) When we apply the $\stiteucl$ rule first in our derivation (below (4)), we can no longer rely on the relational atom $\R_{i}uv$ to apply the $\propag$ rule. However, due to the presence of $\R_{i}wu,\R_{i}wv$ in $\Lambda_{2} = \R,\R_{i}wu,\R_{i}wv, x:\agdia \phi, y:\phi, \Gamma$ we may still apply the $\propag$ rule. Namely, since $\agdia^{n} \in \mathcal{P}_{\Lambda_{1}}(x,y)$, we know there is a sequence of $n$ transitions $x \overset{\agdia}{\longrightarrow} z_{1} \overset{\agdia}{\longrightarrow} \cdots z_{n-1} \overset{\agdia}{\longrightarrow} y$ from $x$ to $y$. We replace each occurrence of $u \overset{\agdia}{\longrightarrow} v$ with $u \overset{\agdia}{\longrightarrow} w \overset{\agdia}{\longrightarrow} v$ and each occurrence of $v \overset{\agdia}{\longrightarrow} u$ with $v \overset{\agdia}{\longrightarrow} w \overset{\agdia}{\longrightarrow} u$. There will thus be a string in $\mathcal{P}_{\Lambda_{2}}(x,y) \cap L_{i}$, and so, the $\propag$ rule may be applied.

\begin{equation}
\begin{tabular}{c}
\AxiomC{$\R,\R_{i}wu,\R_{i}wv,\R_{i}uv,  x:\agdia \phi, y:\phi, \Gamma$}
\RightLabel{$\propag$}
\UnaryInfC{$\R,\R_{i}wu,\R_{i}wv,\R_{i}uv,  x:\agdia \phi, \Gamma$}
\RightLabel{$\stiteucl$}
\UnaryInfC{$\R,\R_{i}wu,\R_{i}wv,  x:\agdia \phi, \Gamma$}
\DisplayProof
\end{tabular}
\end{equation}
\vspace{5pt}
\begin{equation}
\begin{tabular}{c}
\AxiomC{$\R,\R_{i}wu,\R_{i}wv,\R_{i}uv, x:\agdia \phi, y:\phi, \Gamma$}
\RightLabel{$\stiteucl$}
\UnaryInfC{$\R,\R_{i}wu,\R_{i}wv, x:\agdia \phi, y:\phi, \Gamma$}
\RightLabel{$\propag$}
\UnaryInfC{$\R,\R_{i}wu,\R_{i}wv,  x:\agdia \phi, \Gamma$}
\DisplayProof
\end{tabular}
\end{equation}
\qed
\end{proof}
%\end{comment}
%%%%%%%%%%%%%%%%%%%%%%%%%%%%%%%%%%%%%%%%%%%%%%%%%%%%%%%%%%%%%%%%%%
%%%%%%%%%%%%%%%%%%%%%%%%%%%%%%%%%%%%%%%%%%%%%%%%%%%%%%%%%%%%%%%%%%
%%%%%%%%%%%%%%%%%%%%%%%%%%%%%%%%%%%%%%%%%%%%%%%%%%%%%%%%%%%%%%%%%%

%As a result of Lem.~\ref{lm:calculus-properties}--\ref{lm:eucl-elimination}, we obtain cut-free completeness of the calculi $\calc$:

\begin{theorem}[Cut-free Completeness of $\calc$]\label{thm:cut-free-completeness-ldml}
For any formula $\phi \in \mathcal{L}^m$, if $\Vdash \phi$, then $x:\phi$ is cut-free derivable in $\calc$.
\end{theorem}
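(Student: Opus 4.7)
The plan is to chain together the results already established in the paper. First, from $\Vdash \phi$ the Hilbert-style soundness and completeness theorem yields $\vdash_{\ldm} \phi$, and then Thm.~\ref{thm:g3ldm-properties}-(6) gives a derivation of $x:\phi$ in $\ldml$. Applying cut-admissibility (Thm.~\ref{thm:g3ldm-properties}-(5)) I strip every cut to obtain a cut-free derivation in $\ldml$. Since every rule of $\ldml$ is also a rule of $\ldml + \mathsf{PR}$, this same derivation is automatically a cut-free derivation in the extended calculus.

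The next step is to remove the three rules that appear in $\ldml + \mathsf{PR}$ but not in $\calc$, namely $\stitrefl$, $\stiteucl$ and $\stitdiar$. Successive application of Lem.~\ref{lm:refl-elimination} and Lem.~\ref{lm:eucl-elimination} yields a cut-free derivation in $\ldml + \mathsf{PR}$ with no occurrences of $\stitrefl$ or $\stiteucl$. It then remains to eliminate every application of $\stitdiar$, which I would handle by direct simulation: in any remaining instance of $\stitdiar$ the conclusion $\Lambda$ has the form $\R, \R_{i}wu, w:\lb i \rb \phi, \Gamma$, so the propagation automaton $\mathcal{P}_{\Lambda}(w,u)$ contains a direct transition $w \overset{\agdia}{\longrightarrow} u$. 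Hence $\agdia \in \mathcal{P}_{\Lambda}(w,u) \cap L_{i}$, the side condition of $\propag$ is satisfied, and the inference can be replayed verbatim as an instance of $\propag$ with the same premise and conclusion. The resulting object is a cut-free derivation of $x:\phi$ in $\calc$.

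The main obstacle in this strategy is, of course, the $\stiteucl$-elimination step, but this is already the content of Lem.~\ref{lm:eucl-elimination}: the argument there crucially exploits the flexibility of $\propag$ to re-route propagation paths through the Euclidean witness $w$, which is precisely why enriching the calculus with $\mathsf{PR}$ makes the structural rule $\stiteucl$ dispensable. Once that has been carried out, the $\stitdiar$-to-$\propag$ translation is immediate and no other rule needs to be touched, since the remaining rules of the derivation already belong to $\calc$. \qed
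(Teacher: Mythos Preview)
Your proposal is correct and follows essentially the same route as the paper's own proof: pass from validity to a cut-free $\ldml$ derivation via Thm.~\ref{thm:g3ldm-properties}, then invoke Lem.~\ref{lm:refl-elimination} and Lem.~\ref{lm:eucl-elimination} to eliminate the structural rules, and finally observe that every instance of $\stitdiar$ is already an instance of $\propag$. The paper compresses all of this into a one-line reference to these results, while you spell out the chain and the $\stitdiar$-to-$\propag$ simulation explicitly; the content is the same.
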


\begin{proof} Follows from Thm.~\ref{thm:g3ldm-properties}, Lem.'s \ref{lm:calculus-properties}--\ref{lm:eucl-elimination}, and the fact that, for each $i\in Ag$, the $\stitdiar$ rule is admissible, that is,  the $\stitdiar$ rule is an instance of the rule $\propag$.
\qed
\end{proof}

Last, we must %also
 ensure that $\calc$ is sound. To prove this, we %first
 need to stipulate how to interpret sequents on $\ldmn$-models. %.
  Our definition is based on \cite{BerLyo19}: % Finally, at the end of this section, %We also 
%we argue that $\calc$ is sound.

\begin{definition}[Interpretation, Satisfaction, Validity]\label{def:sequent-satisfaction-validity} Let $M$ be an $\ldmn$-model with domain $W$%(Def. \ref{def:models-and-frames})
, $\Lambda = \R,\Gamma$ a labelled sequent, and Lab the set of labels. Let $I$ be an \emph{interpretation function} %of $Lab$ on $M$,
 mapping labels to worlds: \textit{i.e.} $I{:} \ Lab \mapsto W$.

$\Lambda$ is \emph{satisfied} in $M$ with $I$ iff for all relational atoms $\R_{i}xy \in \R$, if $\R_{i} x^{I}y^{I}$ holds in $M$, then there exists some $z : \phi \in \Gamma$ such that $M, z^{I} \Vdash \phi$.

$\Lambda$ is \emph{valid} iff it is satisfiable in every $M$ with any interpretation function $I$.

\end{definition}

\begin{theorem}[$\calc$ Soundness]\label{thm:soundness-ldmnl}
Every sequent derivable in $\calc$ is valid.
\end{theorem}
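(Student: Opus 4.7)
The plan is to prove the theorem by induction on the height of a derivation in $\calc$, showing that each inference rule preserves validity in the sense of Def.~\ref{def:sequent-satisfaction-validity}. For each rule, I would assume (contrapositively) that the conclusion fails in some $\ldmn$-model $M$ under some interpretation $I$, so that every relational atom in the conclusion holds in $M$ but no labelled formula on the right-hand side is satisfied, and derive the failure of some premise. The base case, $\id$, is immediate since $p$ and $\overline{p}$ cannot both hold at $w^{I}$. The propositional rules $\conr$ and $\disr$, together with the modal rules $\settr$, $\settdiar$, and $\stitr$, use the standard semantic analysis, invoking the eigenvariable conditions on $\settr$ and $\stitr$ to freshly interpret $v$ as a witness in $W$ or in $\R_i(w^I)$, respectively.

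For $\ioa$, soundness follows directly from condition \textbf{(C2)}: given interpretations for $u_1,\ldots,u_m$, there exists some $v \in \bigcap_i \R_i(u_i^I)$ which we take as $v^I$, falsifying the premise whenever the conclusion fails. For $\choicerule$, soundness follows from \textbf{(C3)}: if the conclusion fails under $I$, then interpreting $w_0,\ldots,w_n$ appropriately, condition \textbf{(C3)} guarantees that at least one pair $(k,j)$ satisfies $\R_i w_k^I w_j^I$, falsifying the corresponding premise.

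The main obstacle is the propagation rule $\propag$. Assume the conclusion $\R, w{:}\agdia\phi, \Gamma$ fails under $I$: every atom in $\R$ holds in $M$, $M, w^I \not\Vdash \agdia\phi$, and nothing in $\Gamma$ is satisfied. To falsify the premise $\R, w{:}\agdia\phi, u{:}\phi, \Gamma$, it suffices to show $M, u^I \not\Vdash \phi$, which would contradict the validity of the premise once combined with $\R_i w^I u^I$. The crux is establishing $\R_i w^I u^I$ from the side condition. By $\mathcal{P}_{\Lambda}(w,u) \cap L_i \neq \emptyset$, there exists a string $\agdia^{n} \in L_i$ accepted by the propagation automaton, giving a transition sequence $w = z_0 \overset{\agdia}{\longrightarrow} z_1 \overset{\agdia}{\longrightarrow} \cdots \overset{\agdia}{\longrightarrow} z_n = u$. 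By Def.~\ref{def:propagation-automaton}, each transition $z_k \overset{\agdia}{\longrightarrow} z_{k+1}$ corresponds to a relational atom $\R_i z_k z_{k+1}$ or $\R_i z_{k+1} z_k$ present in $\R$, hence $\R_i z_k^I z_{k+1}^I$ or $\R_i z_{k+1}^I z_k^I$ holds in $M$. Since $\R_i$ is an equivalence relation by \textbf{(C1)}, and thus symmetric and transitive, we iteratively compose these to obtain $\R_i w^I u^I$, as required.

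Thus every rule preserves validity, so every derivable sequent is valid. The subtle point throughout is that the semantic justification of the propagation rule does not reference the eliminated structural rules $\stitrefl$ and $\stiteucl$ directly; instead, the equivalence-relation properties of $\R_i$ (given by \textbf{(C1)}) are invoked semantically once, at the level of the model, to absorb arbitrarily long $\R_i$-paths into a single $\R_i$-edge. This is precisely why the application language $L_i = \agdia^{*}$ is the correct choice, since $L_i$ is exactly the set of strings whose transitions can be semantically collapsed to a single $\R_i$-step under conditions \textbf{(C1)}.
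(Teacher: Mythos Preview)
Your proposal is correct and takes essentially the same approach as the paper: the paper defers soundness of all rules except $\propag$ to the already-established soundness of $\ldml$ (Thm.~\ref{thm:g3ldm-properties}), while you give a self-contained treatment of each rule, but the core argument for $\propag$---using the side condition to extract an $\R_i$-path and collapsing it via \textbf{(C1)} to a single $\R_i$-edge---is identical.
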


\begin{proof} We know by Thm.~\ref{thm:g3ldm-properties} that all rules of $\calc$, with the exception of $\propag$, preserve validity. Details of the $\propag$ rule are given in App.~\ref{appendix}.
\qed
\end{proof}

\section{Proof-Search and Decidability}\label{sec_proofsearch}

In this section, we provide a class of proof-search algorithms, each deciding a logic $\ldmsa$ (with $n \in \mathbb{N}$). (%In this section, %For the remainder of the section 
We %assume that we are working in the single-agent setting and 
 use $1$ to denote the agent in the single-agent setting.) In the single-agent case, the independence of agents condition is trivially satisfied, meaning we can omit the $\ioa$ rule from each calculus and from consideration during proof-search. We end the section by commenting on the more complicated %case of proof-search in the 
 multi-agent setting.

%\subsection{The Single-Agent Setting}

In what follows, we prove that derivations in $\ldmlsa$ need only use \emph{forestlike sequents}. The forestlike structure of a sequent $\Lambda$ refers to a graph corresponding to the sequent. This control in sequential structure is what allows us to adapt methods from~\cite{TiuIanGor12} to $\ldmlsa$, and produce a proof-search algorithm that decides $\ldmsa$, for each $n \in \mathbb{N}$. Let us start by making the aforementioned notions precise.

\begin{definition}[Sequent Graph]\label{def:graph-of-sequent} We define a \emph{graph} $G$ to be a tuple $(V,E,L)$, where $V$ is the non-empty \emph{set of vertices}, the \emph{set of edges} $E \subseteq V \times V$, and $L$ is the \emph{labelling function} that maps edges from $E$ into some non-empty set $S$ and vertices from $V$ into some non-empty set $S'$.

Let $\Lambda = \R,\Gamma$ be a labelled sequent and let $Lab(\Lambda)$ be the set of labels in $\Lambda$. The \emph{graph of $\Lambda$}, denoted $G(\Lambda)$, is the tuple $(V,E,L)$, where (i) $V = Lab(\Lambda)$, (ii) $(w,u) \in E$ and $L(w,u) = i$ iff $\R_{i}wu \in \R$, and (iii) $L(w) = \phi$ iff $w : \phi \in \Gamma$.

\end{definition}

\begin{example}\label{eg2:graph-prop-automaton} The sequent \emph{graph} $G(\Lambda)$ %(see Def. \ref{def:graph-of-sequent})
 corresponding to the labelled sequent $\Lambda$ = $\R_{1}xy,\R_{1}zx,x:p,y:\overline{p} \vee q, z:r,z:\Diamond q$ is shown below:

%\begin{adjustwidth}{5 em}{}
\begin{small}
\begin{center}
\begin{tabular}{c c c %@{\hskip 4em} c
}
\xymatrix{
		 \overset{y}{\boxed{\overline{p} \vee q}} &  & \overset{x}{\boxed{p}}\ar[ll]|-{1}  &	& \overset{z}{\boxed{r,\Diamond q}}\ar[ll]|-{1} \\
}
%}
%
%&
%\xymatrix{
%		& \boxed{x}\ar@/^-1pc/@{.>}[dl]|-{\langle 1 \rangle} \ar@/^1pc/@{.>}[dr]|-{\langle 2 \rangle}   &		\\
%		 y\ar@/^-1pc/@{.>}[ur]|-{\langle 1 \rangle}  &	& \boxed{\boxed{z}}\ar@/^1pc/@{.>}[ul]|-{\langle 2 \rangle} \\
%}
\end{tabular}
\end{center}
\end{small}
\end{example}

%\begin{example} The labelled graph of $\R,\Gamma = R_{1}xy,R_{1}xz,R_{2}wx,R_{2}xw,x:p,y:r \vee p,w:r,w:\Box q$, denoted $G(\R,\Gamma)$, is shown below:
%\begin{center}
%\begin{tabular}{c}
%\xymatrix@C=1em{
%		& &   & \overset{x}{\boxed{p}}\ar[dlll]|-{1}\ar[drrr]|-{1}\ar@/^1.0pc/[d]|-{2} 	&   & 	&  & 		\\
%	\overset{y}{\boxed{r \vee p}}	&  &  &  \overset{w}{\boxed{r,\Box q}}\ar@/^1.0pc/[u]|-{2}	&	& &  \overset{z}{\boxed{\emptyset}} & 
%}
%\end{tabular}
%\end{center}

%\end{example}

\begin{definition}[Tree, Forest, Forestlike Sequent, Choice-tree]\label{def:forestlike-sequent} We say that a graph $G = (V,E,L)$ is a \emph{tree} iff there exists a node $w$, called the \emph{root}, such that there is exactly one directed path from $w$ to any other node $u$ in the graph. We say that a graph is a \emph{forest} iff it consists of a disjoint union of trees.

A sequent $\Lambda$ is \emph{forestlike} iff its graph $G(\Lambda)$ is a forest. We refer to each disjoint tree in the graph of a forestlike sequent as a \emph{choice-tree} and for any label $w$ in $\Lambda$, we let $CT(w)$ represent the choice-tree that $w$ belongs to.

\end{definition}

The above %f forestlike sequents and choice-trees 
notions %definitions %notions 
will be %are 
significant for our proof-search algorithms, for example: % consider the following remark: %: e.g., it will be shown that,  Second, 

\begin{remark}\label{rmk:forest-structure-remark}
 When interpreting a sequent, each choice-tree that occurs in the graph of the sequent is a syntactic representation of an equivalence class of $\R_{1}$ (i.e., a choice-cell for agent $1$). Using this insight, % concerning %into
 %the semantic role of
%  choice-trees,
 we know that if agent $1$ is restricted to $n$-many choices, then if there are $m > n$ choice-trees in the sequent, at least two %of the
   choice-trees must correspond to the same equivalence class in $\R_{1}$. We use this observation to specify how %the rule 
   $\mathsf{APC}_{n}^{1}$ is applied in the algorithm.

\end{remark}

The following definitions introduce the necessary tools for the algorithms:

\begin{definition}[Saturation, $\Box-, \agboxone$-realization, $\Diamond-$, $\agdiaone$-propagated]\label{def:saturation-realization-propagation} Let $\Lambda$ be a forestlike sequent %with $w$ a 
 and let w be a label in $\Lambda$.
 
The label $w$ is \emph{saturated} iff the following hold: (i) If $w : \phi \in \Lambda$, then $w : \overline{\phi} \not\in \Lambda$, (ii) if $w:\phi \lor \psi \in \Lambda$, then $w:\phi \in \Lambda$ and $w:\psi \in \Lambda$, (iii) if $w:\phi \land \psi \in \Lambda$, then $w:\phi \in \Lambda$ or $w:\psi \in \Lambda$.

A label $w$ in $\Lambda$ is \emph{$\Box$-realized} iff for every $w : \Box \phi \in \Lambda$, there exists a label $u$ such that $u:\phi \in \Lambda$. A label $w$ in $\Lambda$ is \emph{$\agboxone$-realized} iff for every $w : [1] \phi \in \Lambda$, there exists a label $u$ in $CT(w)$ such that $u:\phi \in \Lambda$.

A label $w$ in $\Lambda$ is \emph{$\Diamond$-propagated} iff for every $w : \Diamond \phi \in \Lambda$, we have %that 
$u:\phi \in \Lambda$ for all labels $u$ in $\Lambda$. A label $w$ in $\Lambda$ is \emph{$\agdiaone$-propagated} iff for every $w : \agdiaone \phi \in \Lambda$, we have %that
 $u:\phi \in \Lambda$ for all labels $u$ in $CT(w)$.

\end{definition}

\begin{definition}[$n$-choice Consistency]\label{def:n-Choice-consistency} Let $\Lambda$ be a forestlike sequent and let our logic be $\ldmsa$ with $n > 0$. We say that $\Lambda$ is \emph{$n$-choice consistent} iff $G(\Lambda)$ contains at most $n$-many choice-trees.

\end{definition}

%\begin{definition}[Leaf label, Internal Label] Let $\Lambda$ be a forestlike sequent containing the label $w$. We define $w$ to be a \emph{leaf label} iff there does not exist a relational atom of the form $\R_{1}wu \in \Lambda$ for some label $u$, i.e., $v$ does not relate to any label $u$ via $\R_{1}$. We define $w$ to be an \emph{internal label} iff $w$ is not a leaf label.
%\end{definition}

\begin{definition}[Stability]\label{def:stability} A forestlike labelled sequent $\Lambda$ is \emph{stable} iff (i) all labels $w$ in $\Lambda$ are saturated, (ii) all labels are $\Box$- and $[1]$-realized, (iii) all labels are $\Diamond$- and $\agdiaone$-propagated,  and (iv) $\Lambda$ is $n$-choice consistent.
\end{definition}

We are now able to % have sufficient machinery to 
define our proof-search algorithms for the logics $\ldmsa$. The algorithms are provided in Fig.~4 %\ref{fig:proof-search-algorithm}
 and are inspired by~\cite{TiuIanGor12}. We emphasize that the execution of instruction 4 in Fig.~4 %\ref{fig:proof-search-algorithm}
  corresponds to an instance of the $(\mathsf{Pr}_{1})$ rule. The algorithms are correct (Thm. \ref{thm:correctness-prove}) and terminate (Thm. \ref{thm:termination-prove}). % results are presented below. 
Last, %Furthermore,
 Lem. \ref{lm:forestlike-invariance} ensures that the concepts of realization, propagation, $n$-choice consistency, and stability are defined at each stage of the computation (Def. \ref{def:saturation-realization-propagation} - \ref{def:stability}). The proofs of Lem. \ref{lm:forestlike-invariance} and Thm. \ref{thm:termination-prove} can be found in App.~\ref{appendix}. % (available at \url{https://arxiv.org/abs/1908.11360}). 

%===========================================================
%I PUT THE ALGORITHM HERE AT THE END OF THE SECTION
%BECAUSE IT WOULD CUT  THE PROOF OF tHM 5 IN HALF. WHICH READS HORRIBLY  :)
%==============================================================
\begin{figure}[t] %i think the figure has to be at the top
\label{fig:proof-search-algorithm}
\noindent\hrule
\texttt{
\begin{center}
Function Prove\textsubscript{n}(Sequent $\R,\Upgamma$) : Boolean \\
\end{center}
\begin{flushleft}
1. If $\R, \Upgamma = \R, w:p, w:\overline{p}, \Upgamma'$, return true.\\
2. If $\R, \Upgamma$ is stable, return false.\\
3. If some label $w$ in $\R, \Upgamma$ is not saturated, then:\\
\ \ \ \ \hangindent=3em (i) If $w : \phi \lor \psi \in \R, \Upgamma$, but either $w : \phi \not\in \R, \Upgamma$ or $w : \psi \not\in \R, \Upgamma$, then let $\R,\Upgamma' = \R, w:\phi, w:\psi, \Upgamma$ and return Prove\textsubscript{n}($\R, \Upgamma'$).\\
\ \ \ \ \hangindent=3em (ii) If $w : \phi \land \psi \in \R, \Upgamma$, but $w : \phi, w : \psi \not\in \R, \Upgamma$, then let $\R,\Upgamma_{1} = \R, w:\phi, \Upgamma$, let $\R,\Upgamma_{2} = \R, w:\psi, \Upgamma$, and return false if Prove\textsubscript{n}($\R, \Upgamma_{i}$) = false for some $i \in \{1,2\}$, and return true otherwise.\\
%\hangindent=0em
4. If some label $w$ in $\R, \Upgamma$ is not $\agdiaone$-propagated, then there is a label $u$ in $CT(w)$ such that $u : \phi \not\in \Upgamma$. Let $\R,\Upgamma' = \R, u: \phi, \Upgamma$ and return Prove\textsubscript{n}($\R, \Upgamma'$).\\
5. If some label $w$ in $\R, \Upgamma$ is not $\Diamond$-propagated, then there is a label $u$ such that $u : \phi \not\in \Upgamma$. Let $\R,\Upgamma' = \R, u: \phi, \Upgamma$ and return Prove\textsubscript{n}($\R, \Upgamma'$).\\
6. If there is a label $w$ that is not $[1]$-realized, then there is a $w : [1] \phi \in \Upgamma$ such that $u : \phi \not\in \Upgamma$ for every label $u \in CT(w)$. Let $\R', \Upgamma' = \R, \R_{1}wv, v : \phi, \Upgamma$ with $v$ fresh and return Prove\textsubscript{n}($\R', \Upgamma'$).\\
7. If there is a label $w$ that is not $\Box$-realized, then there is a $w : \Box \phi \in \Upgamma$ such that $u : \phi \not\in \Upgamma$ for every label $u$ in $\R,\Upgamma$. Let $\R, \Upgamma' = \R, v : \phi, \Upgamma$ with $v$ fresh and return Prove\textsubscript{n}($\R, \Upgamma'$).\\
8. If $\R, \Upgamma$ is not $n$-choice consistent, then let $\R_{k,j},\Upgamma = \R, \R_{1}w_{k}w_{j}, \Upgamma$ (with $0 \leq k \leq n-1$ and $k+1 \leq j \leq n$) and where each $w_{k}$ and $w_{j}$ are distinct roots of choice-trees in $\R, \Upgamma$. Return false if Prove\textsubscript{n}($\R_{k,j}, \Upgamma$) = false for some $k$ and $j$, and return true otherwise.
\end{flushleft}
}
\vspace{-5pt}
\hrule
\caption{The proof-search algorithms %(motivated by the algorithm in~\cite{TiuIanGor12}) 
 for $\ldmsa$ with $n > 0$. The algorithm for $\mathsf{Ldm}_{0}^{1}$ is obtained by deleting line 8.}
 \vspace{-5pt}
\end{figure}

\begin{lemma}\label{lm:forestlike-invariance} Every labelled sequent generated throughout the course of computing $\mathtt{Prove_{n}}(w:\phi)$ is forestlike.
\end{lemma}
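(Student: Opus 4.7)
The plan is to prove the lemma by induction on the number of recursive calls made by the algorithm $\mathtt{Prove_n}$. For the base case, the initial input $w : \phi$ corresponds to a graph consisting of a single vertex $w$ and no edges, which is trivially a tree and therefore a forest. For the inductive step, assuming the current sequent $\mathcal{R},\Gamma$ is forestlike, I inspect each instruction that generates a recursive call and verify that the new sequent's graph is still a forest.

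For instructions 1 and 2 there is nothing to check (no recursive call). Instructions 3(i), 3(ii), 4, and 5 add only labelled formulae of the form $u : \psi$ where $u \in Lab(\mathcal{R},\Gamma)$; thus neither new vertices nor new edges are introduced in $G(\mathcal{R},\Gamma)$, so the graph is unchanged and remains a forest by IH. Instruction 6 introduces a fresh label $v$ together with a single relational atom $\mathcal{R}_1 w v$; in the graph this adds a new vertex $v$ as a child of $w$, extending the tree $CT(w)$ by one leaf while leaving all other choice-trees untouched. Instruction 7 introduces a fresh label $v$ but no relational atom, so $v$ appears as an isolated vertex that constitutes a new singleton tree disjoint from the existing forest.

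The interesting case, and the one I expect to be the main obstacle, is instruction 8, which adds a relational atom $\mathcal{R}_1 w_k w_j$ between two existing labels. Here the crucial design feature of the algorithm is that $w_k$ and $w_j$ are required to be \emph{distinct roots of choice-trees} in $G(\mathcal{R},\Gamma)$. Let $T_k = CT(w_k)$ and $T_j = CT(w_j)$; by IH these are disjoint trees of the forest. Adding the edge $(w_k,w_j)$ makes $w_j$ a child of $w_k$, so every vertex in $T_j$ is now reached from $w_k$ via the unique path $w_k \to w_j \to \cdots$, while paths inside $T_k$ and $T_j$ themselves are unaffected. Since $w_j$ previously had no incoming edge (as a root) and all other vertices of $T_j$ kept their unique incoming path, every vertex of $T_k \cup T_j$ has exactly one directed path from $w_k$; hence $T_k \cup T_j$ together with the new edge is a tree. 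The remaining choice-trees are untouched, so the overall graph is still a forest.

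Combining the inductive steps, every recursive call $\mathtt{Prove_n}$ makes is on a forestlike sequent, establishing the lemma. The only subtle point, which I would flag explicitly in the writeup, is that the side condition ``$w_k$ and $w_j$ are distinct roots of choice-trees'' is exactly what prevents instruction 8 from creating a cycle or a second incoming edge; without this requirement (e.g.\ if one were allowed to connect two non-root vertices, or a root to one of its descendants), the forest property could fail.
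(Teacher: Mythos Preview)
Your proposal is correct and follows essentially the same approach as the paper: an induction on the number of steps executed by $\mathtt{Prove_n}$, with a case analysis over the instructions showing each preserves the forest structure. Your treatment of instruction 8 is in fact more detailed than the paper's (which simply states that connecting two roots yields another tree), and your base case is organised slightly more cleanly, but the argument is the same.
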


\begin{theorem}[Correctness]\label{thm:correctness-prove} \emph{(i)} 
 If $\mathtt{Prove_{n}(w:\phi)}$ returns $\mathtt{true}$, then $w:\phi$ is $\ldmlsa$-provable. \emph{(ii)}
  If $\mathtt{Prove_{n}(w:\phi)}$ returns $\mathtt{false}$, then $w:\phi$ is not $\ldmlsa$-provable. 
\end{theorem}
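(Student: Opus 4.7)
The plan is to prove the two claims separately, handling (i) by a straightforward induction mirroring algorithm steps to rules, and (ii) by combining invertibility of the calculus rules with a counter-model construction at the stable-sequent base case.

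For part (i), I would proceed by induction on the recursion depth of $\mathtt{Prove_n}$. The key observation is that each of the algorithm's instructions, when it triggers a recursive call that ultimately returns $\mathtt{true}$, corresponds, read bottom-up, to an application of an inference rule of $\calc$: instruction 1 is the $\id$ axiom; 3(i) is $\disr$; 3(ii) is $\conr$ (requiring both recursive calls to succeed); 4 is $\propag[1]$ (its side condition is met because $u$ belongs to $CT(w)$, so a string in $L_{1}$ witnesses the path from $w$ to $u$); 5 is $\settdiar$; 6 is $\stitr$ with fresh eigenvariable $v$; 7 is $\settr$ with fresh eigenvariable $v$; and 8 is $\apc$ applied to the $n+1$ roots of distinct choice-trees (again, requiring all $n(n+1)/2$ recursive calls to succeed). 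Gluing these local inferences together yields a derivation of $w:\phi$ in $\ldmlsa$.

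For part (ii), I would exploit that every rule of $\calc$ is height-preserving invertible (Lem.~\ref{lm:calculus-properties}). If $\mathtt{Prove_n}$ returns $\mathtt{false}$ by recursing on a branching instruction (3(ii) or 8), invertibility of the corresponding rule lets us conclude that the original sequent is unprovable as soon as one premise is unprovable. For unary instructions the contrapositive is immediate from invertibility. The work therefore concentrates at the base case (line 2): a stable sequent $\Lambda = \R, \Gamma$ that is not an $\id$ axiom. Here I would extract an $\ldmsa$-counter-model $M = (W, \R_{1}, V)$ directly from $\Lambda$ by taking $W := Lab(\Lambda)$, defining $\R_{1}$ as the equivalence relation whose equivalence classes are exactly the vertex sets of the choice-trees of $G(\Lambda)$, and setting $V(p) := \{ w \in W \mid w : p \notin \Lambda \}$. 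Conditions \textbf{(C1)} and \textbf{(C2)} hold by construction (the single-agent case trivialises independence of agents), and \textbf{(C3)} holds because $n$-choice consistency bounds the number of choice-trees by $n$. A routine ``truth lemma'', proven by induction on formula structure using saturation (for propositional cases), $\Box$- and $\agboxone$-realization, and $\Diamond$- and $\agdiaone$-propagation, establishes that $M, w \not\Vdash \psi$ whenever $w:\psi \in \Gamma$. Hence $\Lambda$ is falsifiable, and by soundness (Thm.~\ref{thm:soundness-ldmnl}) it is not derivable in $\calctwo$; propagating this failure back through the chain of invertible rule applications gives the unprovability of $w:\phi$.

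The main obstacle I expect is the modal-clause cases of the truth lemma, specifically for $\agdiaone$ and $\agboxone$. The subtlety is that $\agdiaone$-propagation, as defined, only pushes formulas within $CT(w)$, and $\agboxone$-realization only witnesses a formula inside $CT(w)$; one must check that this aligns with $\R_{1}$ being the equivalence relation whose classes are the choice-trees, rather than some finer relation generated by the relational atoms appearing in $\Lambda$. A second delicate point is ensuring that after line 8 has added relational atoms between roots of distinct choice-trees, the resulting sequent is still treated correctly in later recursive calls, i.e.\ that the notion of ``choice-tree'' used when we eventually reach stability is the one that genuinely reflects the $\R_{1}$-classes of the extracted model; this is where Lem.~\ref{lm:forestlike-invariance} and the careful scheduling of the instructions in Fig.~4 are essential.
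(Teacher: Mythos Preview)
Your proposal is correct and follows essentially the same line as the paper: part (i) by matching algorithm steps to bottom-up rule applications, and part (ii) by extracting a counter-model from the stable sequent and proving a truth lemma by induction on formula complexity. The one notable difference is that for (ii) you route the argument through height-preserving invertibility to propagate unprovability back from the stable sequent to $w:\phi$, whereas the paper simply observes that the algorithm never deletes labelled formulae, so $w:\phi$ already belongs to the stable sequent $\Lambda$; the truth lemma then gives $M,w \not\Vdash \phi$ directly, and soundness (Thm.~\ref{thm:soundness-ldmnl}) finishes without any backward propagation. Your detour via invertibility is valid but unnecessary. The paper also phrases the model slightly differently---defining $\R_{1}uv$ by $\mathcal{P}_{\Lambda}(u,v)\cap L_{1}\neq\emptyset$ and $V(p)$ by the presence of $w:\overline{p}$ rather than the absence of $w:p$---but these coincide with your choices given the forestlike structure (Lem.~\ref{lm:forestlike-invariance}) and saturation, so the verifications of \textbf{(C1)}--\textbf{(C3)} and the truth-lemma cases go through identically.
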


%%%%%%%%%%%%%%%%%%%%%%%%%IN APPENDIX NOW%%%%%%%%%%%%%%%%%%%%%%%%%%%%
%%%%%%%%%%%%%%%%%%%%%%%%%IN APPENDIX NOW%%%%%%%%%%%%%%%%%%%%%%%%%%%%
%%%%%%%%%%%%%%%%%%%%%%%%%IN APPENDIX NOW%%%%%%%%%%%%%%%%%%%%%%%%%%%%
%\begin{comment}
\begin{proof} (i) %For statement (I), i
It suffices to observe that each step of $\mathtt{Prove_{n}(\cdot)}$ is a backwards application of a rule in $\ldmlsa$, and so, if the proof-search algorithm returns $\mathtt{true}$, %then
 the formula $w : \phi$ is derivable in $\ldmlsa$ with arbitrary label $w$. %K: The last sentence if of course a repetition of the claim, but it definetely reads nice like this. %I moved this part to paragraph above (NB. The execution of instruction 4 corresponds to an instance of the $(\mathsf{Pr}_{1})$ rule.)
   
(ii) To prove 
this 
statement, % (II), %want to 
 we assume that $\mathtt{Prove_{n}(w:\phi)}$ returned $\mathtt{false}$ and show that we can construct a counter-model for %of
    $\phi$.
 %prove that whenever %if
  %$\mathtt{Prove_{n}(w:\phi)}$ = $\mathtt{false}$, %then
   %we can construct a counter-model for %of
    %$\phi$. 
    By assumption, %Since the algorithm returned $\mathtt{false}$, 
    we know that a stable sequent $\Lambda$ was generated with $w : \phi \in \Lambda$. We define our counter-model $M =(W,\R_{1},V)$ as follows: %(i) 
    $W = Lab(\Lambda)$; %(ii)
     $\R_{1}uv$ %holds
      \textit{iff} $\mathcal{P}_{\Lambda}(u,v) \cap L_{1} \neq \emptyset$; and % (iii)
      $w \in V(p)$ \textit{iff} $w : \overline{p} \in \Lambda$. % I deleted all the numberings, since we don't refer to them anyway and we might want to use them later in the proof.

We argue that $F = (W, \R_{1})$ is an $\ldmn$-frame. It is easy to see that $W\neq \emptyset$ %is %will be 
  %non-empty %K: in the previous paragraph you also use emptyset symbol.
   (at the very least, the label $w$ must occur in $\Lambda$). Moreover, condition (C2) is trivially satisfied in the single-agent setting. %Therefore, 
   We prove %focus on proving 
  (C1) and (C3):

(C1) %To show that (C1) holds, w
 We need to prove that $\R_{1}$ is % an equivalence relation, \textit{i.e.} $\R_{1}$ is 
 (i) reflexive and (ii) Euclidean. To prove (i)%that $\R_{1}$ is reflexive
 , it suffices %we %need to
  to show that for each $u\in Lab(\Lambda)$ there exists a string $\omega$ in both $\mathcal{P}_{\Lambda}(u,u)$ and $L_{1}$. By Def. \ref{def:propagation-automaton}, we know that $\varepsilon \in \mathcal{P}_{\Lambda}(u,u)$ since $u$ is both the initial and accepting state. Also, by Def. \ref{def:appl_lang} we know that $\varepsilon \in L_{1}$. %since the language is generated from the regular expression $\agdiaone^{*}$ by definition. K: I omitted this sentence since it is in the definition, might be confusing. 
  To prove (ii), % that $\R_{1}$ is Euclidean, 
  we assume that $\R_{1}wu$ and $\R_{1}wv$ hold, and %want to 
  show that $\R_{1}uv$ holds as well. By our assumption, there exist strings $\agdiaone^{k} \in \mathcal{P}_{\Lambda}(w,u) \cap L_{1}$ and $\agdiaone^{m} \in \mathcal{P}_{\Lambda}(w,v) \cap L_{1}$ (with $k,m \in \mathbb{N}$). It is not difficult to prove that if $\agdiaone^{k} \in \mathcal{P}_{\Lambda}(w,u)$, then $\agdiaone^{k} \in \mathcal{P}_{\Lambda}(u,w)$, and also that if $\agdiaone^{k} \in \mathcal{P}_{\Lambda}(u,w)$ and $\agdiaone^{m} \in \mathcal{P}_{\Lambda}(w,v)$, then $\agdiaone^{k+m} \in \mathcal{P}_{\Lambda}(u,v)$. Hence, we know $\agdiaone^{k+m} \in \mathcal{P}_{\Lambda}(u,v)$, which, together with $\agdiaone^{k+m} \in L_{1}$ (Def. \ref{def:appl_lang}), gives us the desired $\R_{1}uv$.

(C3) By assumption we know $\Lambda$ is stable. Consequently, %in the case
  when $n > 0$ for $\ldmlsa$, %observe that our labelled 
 the sequent $\Lambda$ must be $n$-choice consistent. % since it is stable; h
 Hence, the graph of $\Lambda$ must contain $k \leq n$ choice-trees. Condition (C3) follows straightforwardly. % from this fact.

Since $F$ is an $\ldmn$-frame, $M$ is an $\ldmn$-model. We %now
 show by induction on the complexity of $\psi$ that for any $u : \psi \in \Lambda$, $M, u \not\Vdash \psi$. Consequently, %From this, it follows that 
 $M$ is a counter-model for $\phi$, and so, %implying, %which, 
% together with 
 by Thm. \ref{thm:soundness-ldmnl}, %shows
  %that
 we know $w : \phi$ is not provable in $\ldmlsa$.
 
% It follows from this claim that $M$ is counter-model for $\phi$, and so, by theorem \ref{thm:soundness-ldmnl}, $w : \phi$ is not provable in $\ldmlsa$.

\textit{Base Case.} Assume %that
 $u : p \in \Lambda$. Since $\Lambda$ is stable, we know that $ u : \overline{p} \not\in \Lambda$. % since otherwise $\Lambda$ would not be stable. 
Hence, %Therefore,
 by the definition of $V$, we know that $u \not\in V(p)$, implying that $M, u \not \Vdash p$. %I remembered that you want to avoid as much contradiction reasoning as possible. 

%K: Since we prove for all $\psi$ I replaced all $\psi$ below for $\theta$, because otherwise it is not correct that `$\psi = \Box\psi$, you know what I mean.
\textit{Inductive Step.} We consider each connective %case
 in turn. %:
  %($\vee$) 
  (i) Assume that $u : \theta \lor \chi \in \Lambda$. Since $\Lambda$ is stable, it is saturated, meaning that $u : \theta, u: \chi \in \Lambda$. Hence, by IH, $M,u \not\Vdash \theta$ and $M,u \not\Vdash \chi$, which implies that $M,u \not\Vdash \theta \lor \chi$. 
  (ii) The case $u: \theta\land\chi\in \Lambda$ is similar to the previous case. %($\land$)
   %The $\land$ case is similar to %(i). %
   %previous case. 
   (iii) %($\agdiaone$) 
   Assume %Let
    $u : \agdiaone \theta \in \Lambda$. Since %Due to the fact that 
    $\Lambda$ is stable, we know that every label is $\agdiaone$-propagated. Therefore, for all labels $v \in CT(u)$ %(Def. \ref{def:saturation-realization-propagation}) 
    %the choice-tree containing $u$)
    we have $v : \theta \in \Lambda$. By IH, $M,v \not\Vdash \theta$ for all $v \in CT(u)$. In general, the definition of $\R_{1}$ implies that $\R_{1}xy$ iff $y \in CT(x)$. The former two statements imply that $M,v \not\Vdash \theta$ for all $v$ such that $\R_{1}uv$, and so, $M,u \not\Vdash \agdiaone \theta$. 
    (iv) %($\Diamond$) 
    Assume that $u : \Diamond \theta \in \Lambda$. Since $\Lambda$ is stable, every label is $\Diamond$-propagated, which implies that for all labels $v$ in $\Lambda$, $v : \theta \in \Lambda$. By IH, this implies that for all $v \in W$%(recall that $W = Lab(\Lambda)$)
   , $M,v \not\Vdash \theta$. Thus, $M,u \not\Vdash \Diamond \theta$. 
   (v) %($[1]$) 
   Assume %Let 
   $u:[1] \theta \in \Lambda$. Since $\Lambda$ is stable, we know %Every
   every label in $\Lambda$ is $[1]$-realized. % because $\Lambda$ %the sequent
    %is stable. 
    Therefore, there exists a label $v$ in $CT(u)$ such that $v : \theta \in \Lambda$. By IH, we conclude that $M,v \not\Vdash \theta$. Moreover, since $\R_{1}xy$ iff $y \in CT(x)$, we also %additionally
     know that $\R_{1}uv$, which implies %that
     $M,u \not\Vdash [1] \psi$. 
     (vi) %($\Box$) 
     Assume %suppose that 
     $u : \Box \theta \in \Lambda$. Since $\Lambda$ is stable, we know that every label is $\Box$-realized. Consequently, there %must 
     exists a label $v$ such that $v : \theta \in \Lambda$. By IH, we %may
      conclude %that
       $M,v \not\Vdash \theta$; hence, $M,u \not\Vdash \Box \theta$.
\qed
\end{proof}
%\end{comment}
%%%%%%%%%%%%%%%%%%%%%%%%%IN APPENDIX NOW%%%%%%%%%%%%%%%%%%%%%%%%%%%%
%%%%%%%%%%%%%%%%%%%%%%%%%IN APPENDIX NOW%%%%%%%%%%%%%%%%%%%%%%%%%%%%
%%%%%%%%%%%%%%%%%%%%%%%%%IN APPENDIX NOW%%%%%%%%%%%%%%%%%%%%%%%%%%%%

\begin{theorem}[Termination]\label{thm:termination-prove} For each %every labelled
 formula $w\ {:}\ \phi$, $\mathtt{Prove_{n}(w{:}\ \phi)}$ terminates.
\end{theorem}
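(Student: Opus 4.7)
The plan is to establish termination via a global bound on the resources the algorithm can consume, combined with the observation that each recursive call strictly enlarges the working sequent within those bounds. First I would prove, by a straightforward induction on the execution trace, the \emph{subformula property}: every labelled formula $u:\psi$ appearing in any sequent generated from $\mathtt{Prove}_n(w{:}\phi)$ satisfies $\psi \in \mathsf{Sub}(\phi)$. This is immediate, since every rule encoded in steps 3--7 only introduces subformulas of formulas already present, and step 8 adds no formulas at all. Hence only the finitely many formulas in $\mathsf{Sub}(\phi)$ can ever be attached to any label.

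Next I would bound the total number of labels created. Labels are introduced only by steps 6 and 7, each of which fires under a tight precondition. Step 7 fires for some $w:\Box\psi$ only when \emph{no} label already carries $\psi$; since there are at most $|\phi|$ distinct $\Box$-subformulas and a single witness for each suffices thereafter, step~7 fires at most $|\phi|$ times over the entire computation. For step 6, analogous reasoning applies \emph{per choice-tree}: at most one fresh witness per pair $(CT(w),[1]\psi)$. Since the initial sequent contains one choice-tree and step 7 is the only mechanism that creates a new one (step 8 merges rather than splits), the total number of choice-trees ever appearing is at most $1+|\phi|$. Moreover, realizations persist across step-8 merges: if $v \in CT(w)$ witnessed $w:[1]\psi$, then after merging $v$ is still in the merged class. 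Thus step~6 fires at most $|\phi|(1+|\phi|)$ times, yielding a total of $O(|\phi|^2)$ labels, $O(|\phi|^3)$ labelled formulas, and $O(|\phi|^4)$ possible relational atoms.

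With those bounds in hand, I would define the measure
\[
\mu(\Lambda) \;=\; \bigl(N_F - |\Gamma(\Lambda)|\bigr) + \bigl(N_R - |\R(\Lambda)|\bigr),
\]
where $N_F,N_R$ are the above global upper bounds on labelled formulas and relational atoms. A case analysis on the algorithm's steps shows that every recursive call strictly decreases $\mu$: steps~3(i), 3(ii), 4 and 5 each add at least one labelled formula without removing anything; step~6 adds a fresh labelled formula and a relational atom; step~7 adds a fresh labelled formula; and step~8 adds a fresh relational atom (recall $\ctr_F$- and $\ctr_R$-admissibility from Lem.~\ref{lm:calculus-properties} plus Lem.~\ref{lm:forestlike-invariance}, so the enlargements are genuine). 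Since every branching recursive call (the two branches in step~3(ii) and the $n(n+1)/2$ branches in step~8) also strictly decreases $\mu$, and $\mu \geq 0$, every branch of the computation tree has finite depth; by König's lemma, since branching is finitary, the whole computation terminates.

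The main obstacle I anticipate lies in step 8: after merging two choice-trees by adding $\R_1 w_k w_j$, the propagation automaton $\mathcal{P}_\Lambda$ changes, so $\agdiaone$-propagation (step 4) and $[1]$-realization (step 6) may fire again in the enlarged class. The delicate point is verifying that this cannot reset the global budget. For propagation, each re-firing merely adds a new labelled formula at an existing label, strictly decreasing $\mu$. For realization, the argument above that witnesses persist across merges must be spelled out carefully, confirming that the count of step-6 firings per choice-tree is non-decreasing under merging rather than being doubled. Once this is verified, the lexicographic combination of global bounds and strict decrease of $\mu$ yields termination, completing the proof.
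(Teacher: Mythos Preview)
Your proposal is correct and rests on the same underlying idea as the paper---bound the total supply of labels, labelled formulae, and relational atoms, then observe that each recursive call consumes part of that supply---but you package it differently. The paper argues by direct firing counts: it first bounds how often steps~6 and~7 can execute (hence the label count), then bounds steps~3--5 (at most once per label per subformula), and finally bounds step~8 by the number of choice-trees. You instead compute one global budget $N_F + N_R$ up front and drive a single nonnegative measure $\mu$ downward through every step uniformly. Your version is tidier about the interaction with step~8: you note explicitly that merging choice-trees can re-trigger steps~4 and~6, and you absorb this by arguing that those re-firings still strictly shrink $\mu$ (and that $[1]$-witnesses persist under merges, so the label bound survives). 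The paper's proof simply records that step~8 decreases the choice-tree count and does not spell out this feedback loop. Two minor remarks: your appeal to $\ctr_{\mathsf{F}}$/$\ctr_{\mathsf{R}}$-admissibility is a red herring---the freshness of what each step adds follows directly from the algorithm's blocking conditions together with Lem.~\ref{lm:forestlike-invariance}, not from contraction admissibility---and K\"onig's lemma is unnecessary once you have a strictly decreasing nonnegative integer measure on every branch with finite branching.
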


\begin{corollary}[Decidability and FMP]
For each $n \in \mathbb{N}$, the logic $\ldmsa$ is decidable and has the finite model property.
\end{corollary}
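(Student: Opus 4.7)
The plan is to bundle the Correctness and Termination theorems for $\mathtt{Prove_{n}}$ with the counter-model construction used in the proof of Correctness; no new technical machinery is needed, and the main work has already been done.

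For decidability, I would argue as follows. Given $\phi \in \mathcal{L}^{1}$, pick a label $w$ and execute $\mathtt{Prove_{n}}(w:\phi)$. By Theorem~\ref{thm:termination-prove} the computation halts in finitely many steps, returning either $\mathtt{true}$ or $\mathtt{false}$. By Theorem~\ref{thm:correctness-prove} the output $\mathtt{true}$ is equivalent to $w:\phi$ being $\ldmlsa$-provable, which by Theorem~\ref{thm:cut-free-completeness-ldml} (cut-free completeness of $\calc$) together with the soundness of $\calc$ (Theorem~\ref{thm:soundness-ldmnl}) is equivalent to $\Vdash \phi$, and in turn (by the soundness/completeness of $\ldm$ stated in the excerpt) to $\vdash_{\ldm} \phi$. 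Thus, $\mathtt{Prove_{n}}(w:\phi)$ is a decision procedure for theoremhood in $\ldmsa$.

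For the finite model property, I would reuse the counter-model construction inside the proof of Theorem~\ref{thm:correctness-prove}(ii). If $\not\Vdash \phi$, then $w:\phi$ is not $\ldmlsa$-provable, so by Correctness $\mathtt{Prove_{n}}(w:\phi)$ returns $\mathtt{false}$; this happens precisely when the algorithm reaches a stable sequent $\Lambda$ containing $w:\phi$. The counter-model $M = (W,\R_{1},V)$ defined there takes $W = Lab(\Lambda)$, $\R_{1}uv$ iff $\mathcal{P}_{\Lambda}(u,v)\cap L_{1}\neq\emptyset$, and $u\in V(p)$ iff $u:\overline{p}\in\Lambda$. Since $\mathtt{Prove_{n}}$ terminates, $\Lambda$ is a finite sequent, and hence $Lab(\Lambda)$ is a finite set. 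The proof of Correctness already shows that $M$ is an $\ldmn$-model and that $M,w\not\Vdash \phi$, so $\phi$ is falsified on a finite $\ldmn$-model, giving the FMP.

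The only step requiring a little care is the one that might look like an obstacle: noting that a "stable" sequent produced by a terminating run is genuinely finite (so the counter-model is finite). This is immediate once one observes that every rule invoked in $\mathtt{Prove_{n}}$ adds only finitely many labels and relational atoms per call, and Theorem~\ref{thm:termination-prove} guarantees only finitely many calls. Beyond that, the corollary is an essentially notational packaging of the preceding results.
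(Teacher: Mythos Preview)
Your proposal is correct and follows essentially the same approach as the paper: the paper's own proof simply says the corollary follows from Theorems~\ref{thm:correctness-prove} and~\ref{thm:termination-prove}, and that the FMP holds because the counter-models constructed in Theorem~\ref{thm:correctness-prove} are all finite. You have merely unpacked these references in more detail (including the link to validity via soundness and completeness of $\calc$, and the explicit observation that the stable sequent reached by a terminating run is finite), but the underlying argument is identical.
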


\begin{proof} Follows from Thm. \ref{thm:correctness-prove} and \ref{thm:termination-prove} above. The finite model property follows from the fact that the counter-models constructed in Thm. \ref{thm:correctness-prove} are all finite.
\qed
\end{proof}

%================
%K: I RESTORED IT TO THE ORIGINAL PARAGRAPH
%================
Additionally, %An additional consequence concerns the shape of sequents occurring in $\ldmlsa$ derivations. 
 from a computational viewpoint, 
 it is interesting to know if completeness is preserved under a restricted class of %labelled
 sequents (cf.~\cite{CiaLyoRam18}). Indeed, Lem. \ref{lm:forestlike-invariance}, Thm. \ref{thm:correctness-prove} and Thm. \ref{thm:termination-prove}, imply that completeness is preserved when we restrict $\ldmlsa$ derivations to forestlike sequents%. % occurring 
% in $\ldmlsa$ derivations%(Cor. \ref{cor:forest-like-derivations})
 ; that is, when inputting a formula into our algorithms, the sequent produced at each step of the computation will be forestlike. %Moreover, t
 Interestingly, this result was obtained %\textcolor{blue}{syntactically}
 via our proof-search algorithms. %, which suggests that such algorithms can be leveraged to conclude which subclasses of labelled sequents preserve completeness. 

\begin{comment}
Additionally, %An additional consequence concerns the shape of sequents occurring in $\ldmlsa$ derivations. 
% from a computational viewpoint, 
 it is interesting to know if completeness is preserved under a restricted class of %labelled
 sequents (cf.~\cite{CiaLyoRam18}). Indeed, Lem. \ref{lm:forestlike-invariance}, Thm. \ref{thm:correctness-prove} and Thm. \ref{thm:termination-prove}, imply that completeness is preserved when we restrict $\ldmlsa$ derivations to forestlike sequents. % occurring 
% in $\ldmlsa$ derivations%(Cor. \ref{cor:forest-like-derivations})
% ; that is, when inputting a formula into our algorithm, the sequent produced at each step of the computation will be forestlike. %Moreover, t
 Interestingly, this result was obtained %\textcolor{blue}{syntactically}
 via our proof-search algorithm. %, which suggests that such algorithms can be leveraged to conclude which subclasses of labelled sequents preserve completeness. 
\end{comment}

\begin{corollary}[Forestlike Derivations]\label{cor:forest-like-derivations}
For each $n \in \mathbb{N}$, if a labelled formula $w : \phi$ is derivable in $\ldmlsa$, then it is derivable using only forestlike sequents.
\end{corollary}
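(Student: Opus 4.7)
The plan is to use the proof-search algorithm itself as the constructive witness producing a forestlike derivation. Suppose $w:\phi$ is derivable in $\ldmlsa$. First I would run $\mathtt{Prove_n}(w:\phi)$; by Thm. \ref{thm:termination-prove} this computation terminates, so it returns either $\mathtt{true}$ or $\mathtt{false}$. If it returned $\mathtt{false}$, then by Thm. \ref{thm:correctness-prove}(ii) the formula $w:\phi$ would not be $\ldmlsa$-provable, contradicting our hypothesis. Hence $\mathtt{Prove_n}(w:\phi)$ must return $\mathtt{true}$.

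Next I would observe, as already noted in the proof of Thm. \ref{thm:correctness-prove}(i), that each instruction of $\mathtt{Prove_n}$ mirrors exactly one bottom-up rule application of $\ldmlsa$: disjunction in step 3(i), conjunction in 3(ii), an instance of $\propag$ in steps 4 and 5, $\stitr$ in step 6, $\settr$ in step 7, and $\choicerule$ in step 8 (with the branching premises of the multi-premise rules corresponding to the recursive calls of the algorithm). Consequently, the tree of successful recursive calls generated by $\mathtt{Prove_n}(w:\phi)$ can be reassembled into an $\ldmlsa$-derivation of $w:\phi$.

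It remains to verify that every sequent appearing in this derivation is forestlike. The input $w:\phi$ has a trivially forestlike graph consisting of a single vertex labelled by $\phi$ with no edges, so the hypothesis of Lem. \ref{lm:forestlike-invariance} holds and every sequent produced by the algorithm is forestlike. Since the reconstructed derivation uses exactly the sequents generated during the computation, it uses only forestlike sequents, which establishes the corollary.

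The only nontrivial ingredient is the forestlike invariance provided by Lem. \ref{lm:forestlike-invariance}; the remaining steps amount to the routine translation between the algorithm's trace and a sequent derivation, combined with the correctness and termination results already in hand.
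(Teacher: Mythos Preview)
Your proposal is correct and follows exactly the route the paper intends: combine termination (Thm.~\ref{thm:termination-prove}) and correctness (Thm.~\ref{thm:correctness-prove}) to force $\mathtt{Prove_n}(w{:}\phi)$ to return $\mathtt{true}$, then read off a derivation from the algorithm's trace and invoke Lem.~\ref{lm:forestlike-invariance} for the forestlike property. One small inaccuracy: step~5 of the algorithm corresponds to a bottom-up application of the $\settdiar$ rule, not $\propag$; this does not affect the argument.
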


%\begin{proof} Follows from Lem. \ref{lm:forestlike-invariance}, and Theorems \ref{thm:correctness-prove} and \ref{thm:termination-prove}.
%\end{proof}

\subsubsection{A Note on the Multi-Agent Setting of $\calc$.}\label{sec-multiagent}

As a concluding remark, we briefly touch upon extending the current results to the multi-agent calculi $\calc$. 
 In the multi-agent setting (when $n = 0$), %we still obtain forestlike sequents which, through applications of the independence of agents rule $\ioa$, turn into 
 our sequents have the structure of \textit{directed acyclic graphs} (i.e., directed graphs free of cycles), due to the independence of agents rule $\ioa$. In such graphs, one can easily recognize loop-nodes---i.e., %A loop-node occurs when there is 
 a path from an ancestor node to the alleged loop-node such that both nodes are labelled with the same multiset of formulae---%\footnote{A loop-node occurs when there is a path from an ancestor node to the loop-node such that both nodes are labelled with the same multiset of formulae.} 
 and use this information to bound the depth of the sequent during proof-search (cf.~\cite{TiuIanGor12}). 

The main challenge concerns the %amount of applications of the 
$\ioa$ rule, which % in the multi-agent algorithm. 
%Observe that the $\ioa$ rule, 
 when applied bottom-up during proof-search, introduces a fresh label $v$ to the sequent. As a consequence, one must ensure that if proof-search terminates in a counter-model construction, this label $v$ satisfies the independence of agents condition in that model. At first glance, one might conjecture that for every application of the $\ioa$ rule an additional application of the rule is needed to saturate the independence of agents condition. Of course, in such a case the algorithm will not terminate with a sequent that is readily convertible to a counter-model. Fortunately, it turns out that only finitely many applications of $\ioa$ are needed to construct a counter-model satisfying independence of agents. The authors have planned to devote their future work to answer this open problem for the multi-agent setting.

\section{Conclusion}\label{sec_concl}

    %\vspace*{-1em}
This paper introduced the first cut-free complete calculi for the class of multi-agent $\ldmn$ logics, %extending the class addressed in \cite{BerLyo19}. 
 introduced in \cite{Xu94-2}.
 We adapted propagation rules, discussed in~\cite{TiuIanGor12}, in order to refine the multi-agent $\ldml$ labelled calculi and generate the proof-search friendly $\calc$ calculi. For the single agent case, we provided a class of terminating proof-search algorithms, each deciding a logic $\ldmsa$ (with $n \in \mathbb{N}$), including counter-model extraction from failed proof-search.

As discussed in Sec.~\ref{sec-multiagent}, we plan to devote future research to leveraging the current results for the multi-agent setting and to provide terminating proof-search procedures for the entire $\ldmn$ class. As a natural extension, we aim to implement the proof-search algorithms from Sec.~\ref{sec_proofsearch} in \textsc{Prolog} (e.g., as in \cite{GirLelOliPozVit17}). Additionally, we plan to expand the current framework to include deontic $\stit$ operators (e.g., from \cite{Hor01,Mur04}) with the goal of automating normative, agent-based reasoning. Last, it is shown in~\cite{BalHerTro08} that $\ldmp{0}{1}$ has an NP-complete satisfiability problem and each logic $\ldmp{0}{m}$, with $m > 0$, is NEXPTIME-complete. %Still, the complexity of the entire class of $\ldmn$ logics, considered in this paper, remains unknown. 
%Along with expanding our proof-search algorithms to the class of all $\ldmn$ logics, we aim to investigate their complexity and the optimality. %of the associated proof-search algorithms.
Along with expanding our proof-search algorithms to the class of 
 all $\ldmn$ logics, we aim to investigate the complexity %of this class 
 and %the
  optimality of our associated
   algorithms.

 \bibliographystyle{splncs04}
% \bibliography{mybibliography}
%

%Original Bib that uses BibTex File
%\bibliographystyle{plain}
%\bibliographystyle{abbrv}
%\bibliography{mybib}

%%%%%%%%%%%%%%%%%%%%%%%%%%%%%%%%%%%%%%%%%%%%
%%%%%%%%%%%%%%%%%APPENDIX%%%%%%%%%%%%%%%%%%%
%%%%%%%%%%%%%%%%%%%%%%%%%%%%%%%%%%%%%%%%%%%%

\appendix

\section*{Appendix}
\section{Proofs}\label{appendix}

\begin{customthm}{\ref{thm:g3ldm-properties}} The $\ldml$ calculi have the following properties:

\begin{enumerate}

\item  All sequents of the form $\R, w : \phi, w: \overline{\phi}, \Gamma$ are derivable;

\item Variable-substitution is height-preserving admissible;

\item All inference rules are height-preserving invertible;

\item The weakening rule $\wk$ and two contraction rules $\ctr$ (below) are height-preserving admissible;

\begin{center}
\begin{tabular}{c @{\hskip 1em} c @{\hskip 1em} c}
\AxiomC{$\R,\Gamma$}
\RightLabel{$\wk$}
\UnaryInfC{$\R,\R',\Gamma',\Gamma$}
\DisplayProof

&

\AxiomC{$\R,\R',\R',\Gamma$}
\RightLabel{$\ctr_{\mathsf{R}}$}
\UnaryInfC{$\R,\R',\Gamma$}
\DisplayProof

&

\AxiomC{$\R,\Gamma',\Gamma',\Gamma$}
\RightLabel{$\ctr_{\mathsf{F}}$}
\UnaryInfC{$\R,\Gamma',\Gamma$}
\DisplayProof
\end{tabular}
\end{center}

\item The cut rule $\cut$ (below) is admissible;

\begin{center}
\begin{tabular}{c}
\AxiomC{$\R,x:\phi,\Gamma$}
\AxiomC{$\R,x:\overline{\phi},\Gamma$}
\RightLabel{$\cut$}
\BinaryInfC{$\R,\Gamma$}
\DisplayProof
\end{tabular}
\end{center}

\item For every formula $\phi \in \mathcal{L}$, $\vdash w: \phi$ is derivable in $\ldml$ if and only if $\vdash_{\ldm} \phi$, i.e. $\ldml$ is sound and complete relative to $ \ldm$.

\end{enumerate}
\end{customthm}

\begin{proof} Clause (1) is proven by induction on the complexity of the formula $\phi$, whereas clauses (2)-(4) are proven by induction on the height of the given derivation. Moreover, clause (5) is proven by induction on the complexity of the cut formula $\phi$ with a subinduction on the sum of the heights of the premises of the $\cut$ rule. All proofs are similar to proofs of the same properties for modal calculi presented in~\cite{Neg05}.

Concerning clause (5), the proof of soundness is argued along the same lines as in~\cite[Sec.~5.1]{Neg09} and uses Def.~\ref{def:sequent-satisfaction-validity}. Completeness is proven by showing that all axioms of $\ldmn$ can be derived and that if the premise of an inference rule is derivable, then so is the conclusion. All proofs are fairly simple, with the exception of the derivation of the $\ioa$ and $\apc$ axioms, which we provide below.\\

%For the sake of readability, we only present the two-agent case for the $\ioa$ axiom and the case when the number of choices $n=2$ for the $\apc$ axiom; the derivation for $m$-agent case of $\ioa$ and the $n$ choice case for $\apc$ are similar.\\

\noindent
$\ioa$:

\begin{center}
\begin{small}
\resizebox{\columnwidth}{!}{
\begin{tabular}{c}

\AxiomC{$\Pi_{1}$}
\AxiomC{$\cdots$}
\AxiomC{$\Pi_{m}$}

\TrinaryInfC{$\R_{1}y_{1}v, \ldots, \R_{m}y_{m}v,x : \Box \lb 1 \rb \overline{\phi}_{1}, \ldots, x: \Box \lb m \rb \overline{\phi}_{m}, y_{1} : \lb 1 \rb \overline{\phi}_{1}, \ldots, y_{m}: \lb m \rb \overline{\phi}_{m}, x: \Diamond ([1] \phi_{1} \wedge \cdots \wedge [m] \phi_{m}), v: [1] \phi_{1} \wedge \cdots \wedge [m] \phi_{m}$}

\UnaryInfC{$\R_{1}y_{1}v, \ldots, \R_{m}y_{m}v, x : \Box \lb 1 \rb \overline{\phi}_{1}, \ldots, x: \Box \lb m \rb \overline{\phi}_{m},y_{1} : \lb 1 \rb \overline{\phi}_{1}, \ldots, y_{m}: \lb m \rb \overline{\phi}_{m}, x: \Diamond ([1] \phi_{1} \wedge \cdots \wedge [m] \phi_{m})$}
\UnaryInfC{$x : \Box \lb 1 \rb \overline{\phi}_{1}, \ldots, x: \Box \lb m \rb \overline{\phi}_{m},y_{1} : \lb 1 \rb \overline{\phi}_{1}, \ldots, y_{m}: \lb m \rb \overline{\phi}_{m}, x: \Diamond ([1] \phi_{1} \wedge \cdots \wedge [m] \phi_{m})$}
\UnaryInfC{$x : \Box \lb 1 \rb \overline{\phi}_{1}, \ldots, x: \Box \lb m \rb \overline{\phi}_{m}, x: \Diamond ([1] \phi_{1} \wedge \cdots \wedge [m] \phi_{m})$}
\UnaryInfC{$x : \Box \lb 1 \rb \overline{\phi}_{1} \vee \cdots \vee \Box \lb m \rb \overline{\phi}_{m} \vee \Diamond ([1] \phi_{1} \wedge \cdots \wedge [m] \phi_{m})$}
\DisplayProof

\end{tabular}
}
\end{small}
\end{center}

\noindent
with $\Pi_{i}$ (for $1 \leq i \leq m$) given by:\\

\begin{center}
\begin{small}
\begin{tabular}{c}
\AxiomC{$\R_{i}vu,\R_{i}y_{i}v, \R_{i}y_{i}u, ..., y_{i} : \lb i \rb \overline{\phi}_{i}, u : \overline{\phi}_{i},  u : \phi_{i}$}
\UnaryInfC{$\R_{i}vu,\R_{i}y_{i}v, \R_{i}y_{i}y_{i}, \R_{i}vy_{i}, \R_{i}y_{i}u, ..., y_{i} : \lb i \rb \overline{\phi}_{i}, u : \phi_{i}$}
\UnaryInfC{$\R_{i}vu,\R_{i}y_{i}v, \R_{i}y_{i}y_{i}, \R_{i}vy_{i}, ..., y_{i} : \lb i \rb \overline{\phi}_{i}, u : \phi_{i}$}
\UnaryInfC{$\R_{i}vu,\R_{i}y_{i}v, \R_{i}y_{i}y_{i}, ..., y_{i} : \lb i \rb \overline{\phi}_{i}, u : \phi_{i}$}
\UnaryInfC{$\R_{i}vu,\R_{i}y_{i}v, ..., y_{i} : \lb i \rb \overline{\phi}_{i}, u : \phi_{i}$}
\UnaryInfC{$\R_{i}y_{i}v, ..., y_{i} : \lb i \rb \overline{\phi}_{i}, v: [i] \phi_{i}$}
\DisplayProof
\end{tabular}
\end{small}
\end{center}

\noindent
$\apc$:

\begin{center}
\begin{small}
\resizebox{\columnwidth}{!}{
\begin{tabular}{c}
%\AxiomC{$\Pi_{1}$}
%\AxiomC{$\cdots$}
%\AxiomC{$\Pi_{n(n+1)/2}$}

\AxiomC{$\Big\{ \Pi_{k,j} \ \Big| \ 0 \leq k \leq n-1 \text{, } k+1 \leq j \leq n \Big\}$}

%\TrinaryInfC{$w:\Box\langle i \rangle \overline{\phi_1}, w: \Box (\phi_1\lor\langle i\rangle \overline{\phi_2}), v:\langle i \rangle\overline{\phi_1}, u: \phi_1, u:\langle i\rangle \overline{\phi_2}, w:\phi_1,w:\phi_2$}

\UnaryInfC{$w_{0}:\Box\langle i \rangle \overline{\phi_1}, w_{1} : \overline{\phi_{1}},
w_{0}: \Box (\phi_1\lor\langle i\rangle \overline{\phi_2}), w_{2} : \phi_{1}, w_{2} : \agdia \overline{\phi_2}, \ldots,
w_{0}:\Box (\phi_{1} \lor \cdots \phi_{n-1} \lor \agdia \overline{\phi_{n}}),
w_{n}: \phi_{1}, \ldots, w_{n} : \phi_{n-1}, w_{n} : \agdia \overline{\phi_{n}},
 w_{0}:\phi_1, \ldots, w_{0}: \phi_n$}
\UnaryInfC{$w_{0}:\Box\langle i \rangle \overline{\phi_1}, w_{0}: \Box (\phi_1\lor\langle i\rangle \overline{\phi_2}), \ldots, w_{0}:\Box (\phi_{1} \lor \cdots \phi_{n-1} \lor \agdia \overline{\phi_{n}}), w_{0}:\phi_1, \ldots, w_{0}: \phi_n$}
\UnaryInfC{$w_{0}:\Box\langle i \rangle \overline{\phi_1} \lor \Box (\phi_1\lor\langle i\rangle \overline{\phi_2}) \lor \cdots \lor \Box (\phi_{1} \lor \cdots \phi_{n-1} \lor \agdia \overline{\phi_{n}}) \lor\phi_1\lor \cdots \lor \phi_n$}
\DisplayProof 
\end{tabular}
}
\end{small}
\end{center}

\noindent
The $\Pi_{0,j}$ (for $1 \leq j \leq n$) derivations are shown below left, and the $\Pi_{k,j}$ (for $0 < k \leq n-1$ and $k+1 \leq j \leq n$) derivations are shown below right.

\begin{center}
\begin{small}
\resizebox{\columnwidth}{!}{
\begin{tabular}{c @{\hskip 1em} c}
\AxiomC{$\R_{i}w_{0}w_{j}, \R_{i}w_{j}w_{0},, \ldots, w_{0}: \phi_{j}, w_{j}: \agdia \overline{\phi_{j}}, w_{0} : \overline{\phi_{j}}$}
\UnaryInfC{$\R_{i}w_{0}w_{j}, \R_{i}w_{j}w_{0}, \ldots, w_{0}: \phi_{j} , w_{j}: \agdia \overline{\phi_{j}}$}
\dashedLine
\UnaryInfC{$\R_{i}w_{0}w_{j}, \ldots, w_{0}: \phi_{j} , w_{j}: \agdia \overline{\phi_{j}}$}
\DisplayProof

&

\AxiomC{$\R_{i}w_{k}w_{j}, \ldots, w_{k}: \agdia \overline{\phi_{k}}, w_{j} : \overline{\phi_{k}}, w_{j}: \phi_{k}$}
\UnaryInfC{$\R_{i}w_{k}w_{j}, \ldots, w_{k}: \agdia \overline{\phi_{k}}, w_{j}: \phi_{k}$}
\DisplayProof
\end{tabular}
}
\end{small}
\end{center}

Note that the dashed lines in the above derivations represent applications of the $(\mathsf{sym}_{i})$ rule (below left); however, we may apply this rule since it is admissible in $\ldml$ (as shown below right).

\begin{center}
\begin{small}
\begin{tabular}{c @{\hskip 2em} c}
 \AxiomC{$\R, \R_{i}wu, \R_{i}uw, \Gamma$}
 \RightLabel{$(\mathsf{sym}_{i})$}
 \UnaryInfC{$\R, \R_{i}wu,\Gamma$}
 \DisplayProof
 
 &
 
\AxiomC{$\R, \R_{i}wu, \R_{i}uw, \Gamma$}
\RightLabel{Lem.~\ref{thm:g3ldm-properties}, Clause (4)}
\UnaryInfC{$\R, \R_{i}wu, \R_{i}uu, \R_{i}uw, \Gamma$}
\RightLabel{$\stiteucl$}
\UnaryInfC{$\R, \R_{i}wu, \R_{i}ww, \Gamma$}
\RightLabel{$\stitrefl$}
\UnaryInfC{$\R, \R_{i}wu,\Gamma$}
\DisplayProof
\end{tabular}
\end{small}
\end{center} 

\end{proof}

\begin{customlem}{\ref{lm:calculus-properties}} The calculus $\ldml + \mathsf{PR}$ has the following properties: (i) All sequents of the form $\R, w : \phi, w: \overline{\phi}, \Gamma$ are derivable, (ii) Variable-substitution is height-preserving admissible, (iii) All inference rules are height-preserving invertible, (iv) The $\wk$, $\ctr_{\mathsf{R}}$ and $\ctr_{\mathsf{F}}$ rules are height-preserving admissible.

\end{customlem}

%\begin{customlem}{\ref{lm:contraction-admissibility-ldml+Pr}} The contraction rules $\ctr_{F}$ and $\ctr_{R}$ are height-preserving admissible in $\ldml + \mathsf{PR}$.
%\end{customlem}

\begin{proof} 

(i) is proved by induction on the complexity of $\phi$ and (ii)-(iv) are shown by induction on the height of the given derivation. All proofs are routine, so we only 
prove the most significant result: height-preserving admissibility of contraction. 
% present the proof of the most significant result: height-preserving admissibility of contraction; that is, $\ctr_{\mathsf{R}}$ and $\ctr_{\mathsf{F}}$.

We proceed %ve the result 
 by induction on the height of the given derivation. With the exception of %the 
 $\propag$% rules
 , the proof is exactly the same as for Thm. \ref{thm:g3ldm-properties} clause 4 (see %, which can be found in
 ~\cite{Neg05,Neg09} for details). Hence, we only %therefore
  prove the $\propag$ case in the inductive step.

First, we show that %the
 $\ctr_{\mathsf{F}}$ %rule
  can be permuted with $\propag$. %, and then show that the $\ctr_{\mathsf{R}}$ can be permuted. 
 There are two cases: % to consider: 
 either the derivation performs a formula contraction solely in the formula-context $\Gamma$ (below left), or the derivation performs a contraction with the auxiliary formula $w:\agdia \phi$ (below right) (potentially performing a contraction in $\Gamma$): %the formula-context).

\begin{center}
\begin{tabular}{c @{\hskip 2em} c}
\AxiomC{$\R, w:\agdia \phi, u : \phi, \Gamma$}
\RightLabel{$\propag$}
\UnaryInfC{$\R, w:\agdia \phi, \Gamma$}
\RightLabel{$\ctr_{\mathsf{F}}$}
\UnaryInfC{$\R, w:\agdia \phi, \Gamma'$}
\DisplayProof

&

\AxiomC{$\R, w:\agdia \phi, w:\agdia \phi, u : \phi, \Gamma$}
\RightLabel{$\propag$}
\UnaryInfC{$\R, w:\agdia \phi, w:\agdia \phi, \Gamma$}
\RightLabel{$\ctr_{\mathsf{F}}$}
\UnaryInfC{$\R, w:\agdia \phi, \Gamma'$}
\DisplayProof

\end{tabular}
\end{center}

In both cases, any sequence of transitions between $w$ and $u$ will be preserved, meaning that the $\ctr_{\mathsf{F}}$ rule may be freely permuted with the $\propag$ rule without affecting the side condition of the propagation rule. % side condition.

Secondly, we consider the $\ctr_{\mathsf{R}}$ case, which has the following form:

\begin{center}
\begin{tabular}{c @{\hskip 2em} c}
\AxiomC{$\R, w:\agdia \phi, u : \phi, \Gamma$}
\RightLabel{$\propag$}
\UnaryInfC{$\R, w:\agdia \phi, \Gamma$}
\RightLabel{$\ctr_{\mathsf{R}}$}
\UnaryInfC{$\R', w:\agdia \phi, \Gamma$}
\DisplayProof
\end{tabular}
\end{center}

%I deleted n-1 since there is no reference to n in the first place.
Since the $\propag$ rule is applied first, we know there exists a sequence of transitions $w \overset{\agdia}{\longrightarrow} v_{1} \overset{\agdia}{\longrightarrow} \cdots v_{n} \overset{\agdia}{\longrightarrow} u$ from $w$ to $u$. Notice that $\ctr_{\mathsf{R}}$ contracts identical relational atoms in $\R$, resulting in a single copy still present in $\R'$. % multiset; 
 Hence, if we apply $\ctr_{\mathsf{R}}$ first, we may still apply $\propag$ afterwards, since the same sequence of transitions from $w$ to $u$ remains present %will exists
  in $R'$. % from $w$ to $u$.
  See~\cite[Lem. 6.12]{GorPosTiu11} for further details on the preservation of transitions under contractions.

\end{proof}

\begin{customlem}{\ref{lm:refl-elimination}}[$\stitrefl$-Elimination] Every sequent $\Lambda$ derivable in $\ldml + \mathsf{PR}$ is derivable without the use of $\stitrefl$.
\end{customlem}

\begin{proof} The result is proven by induction on the height of the given derivation. We assume that $\refl$ is used once as the last inference in the given derivation. The general result follows by successively eliminating topmost occurrences of $\refl$ rule instances. \\

\textit{Base Case.} %In the base case, 
 An application of $\stitrefl$ on an initial sequent (%shown 
below left), where possibly $w=u$, can be re-written as an instance of the $\id$ rule (%shown 
below right): 
\begin{center}
\begin{tabular}{c @{\hskip 2 em} c}
\AxiomC{$\R, \R_{i}ww, u:p, u:\overline{p}, \Gamma$}
\RightLabel{$\stitrefl$}
\UnaryInfC{$\R, u:p, u:\overline{p}, \Gamma$}
\DisplayProof

&

\AxiomC{}
\RightLabel{$\id$}
\UnaryInfC{$\R, u:p, u:\overline{p}, \Gamma$}
\DisplayProof
\end{tabular}
\end{center}

\textit{Inductive step.} We show the inductive step for the $\ioa$ and $\choicerule$ rules, as well as for the non-trivial $\stitdiar$, $\stiteucl$, and $\propag$ cases. All other cases are resolved by applying IH to the premise followed by an application of the corresponding rule.\\

\noindent
(i) The $\stitrefl$ rule may be freely permuted with the $\ioa$ rule:

\begin{center}
\begin{tabular}{c @{\hskip 1em} c}
\AxiomC{$\R,\R_{i}ww, \R_{1}u_{1}v, ..., \R_{n}u_{n}v, \Gamma$}
\RightLabel{$\ioa$}
\UnaryInfC{$\R,\R_{i}ww, \Gamma$}
\RightLabel{$\stitrefl$}
\UnaryInfC{$\R,\Gamma$}
\DisplayProof

&

\AxiomC{$\R,\R_{i}ww, \R_{1}u_{1}v, ..., \R_{n}u_{n}v, \Gamma$}
\RightLabel{$\stitrefl$}
\UnaryInfC{$\R,\R_{1}u_{1}v, ..., \R_{n}u_{n}v,\Gamma$}
\RightLabel{$\ioa$}
\UnaryInfC{$\R,\Gamma$}
\DisplayProof
\end{tabular}
\end{center}

\noindent
(ii) We may easily permute the $\stitrefl$ rule with the $\choicerule$ rule:

\begin{center}
\begin{tabular}{c}
\AxiomC{$\Big\{ \R, \R_{i}uu, \R_{i}w_{k}w_{j}, \Gamma \ \Big| \ 0 \leq k \leq n-1 \text{, } k+1 \leq j \leq n \Big\}$}
\RightLabel{$\choicerule$}
\UnaryInfC{$\R,\R_{i}uu,\Gamma$}
\RightLabel{$\stitrefl$}
\UnaryInfC{$\R,\Gamma$}
\DisplayProof
\end{tabular}
\end{center}

\begin{center}
\begin{tabular}{c}
\AxiomC{$\Big\{ \R, \R_{i}uu, \R_{i}w_{k}w_{j}, \Gamma \ \Big| \ 0 \leq k \leq n-1 \text{, } k+1 \leq j \leq n \Big\}$}
\RightLabel{$\stitrefl \times \frac{n(n+1)}{2}$}
\UnaryInfC{$\Big\{ \R, \R_{i}w_{k}w_{j}, \Gamma \ \Big| \ 0 \leq k \leq n-1 \text{, } k+1 \leq j \leq n \Big\}$}
\RightLabel{$\choicerule$}
\UnaryInfC{$\R,\Gamma$}
\DisplayProof
\end{tabular}
\end{center}

\noindent
(iii) In the case of $\stitdiar$, when applying the $\stitrefl$ rule to the first premise of the left derivation instead, the propagation rule $\propag$ may be applied to the resulting sequent $\Lambda = \R, w:\agdia \phi, w:\phi, \Gamma$ since the empty string $\varepsilon \in \mathcal{P}_{\Lambda}(w,w) \cap L_{i}$:

\begin{center}
\begin{tabular}{c @{\hskip 2em} c}

\AxiomC{$\R, \R_{i}ww, w:\agdia \phi, w:\phi, \Gamma$}
\RightLabel{$\stitdiar$}
\UnaryInfC{$\R, \R_{i}ww, w:\agdia \phi, \Gamma$}
\RightLabel{$\stitrefl$}
\UnaryInfC{$\R, w:\agdia \phi, \Gamma$}
\DisplayProof

&

\AxiomC{$\R, \R_{i}ww, w:\agdia \phi, w:\phi, \Gamma$}
\RightLabel{$\stitrefl$}
\UnaryInfC{$\R, w:\agdia \phi, w:\phi, \Gamma$}
\RightLabel{$\propag$}
\UnaryInfC{$\R, w:\agdia \phi, \Gamma$}
\DisplayProof
\end{tabular}
\end{center}

\noindent
(iv) The non-trivial case of permuting the $\stitrefl$ rule with the $\stiteucl$ rule is shown below. The case may be resolved by leveraging admissibility of contraction. Note that dashed lines (below right) have been used to represent an application of height-preserving admissibility of contraction (Lem. \ref{lm:calculus-properties}), %. In the $\stiteucl^{*}$ case the 
 the use of which decreases the height of the derivation by 1. As a consequence, %, and so,
  the height of the $\stitrefl$ rule application % instance
 also decreases by 1. The two cases are accordingly:

\begin{center}
\begin{tabular}{c @{\hskip 2em} c}
\AxiomC{$\R,\R_{i}ww, \R_{i}ww, \R_{i}ww, \Gamma$}
\RightLabel{$\stiteucl$}
\UnaryInfC{$\R, \R_{i}ww, \R_{i}ww, \Gamma$}
\RightLabel{$\stitrefl$}
\UnaryInfC{$\R, \R_{i}ww, \Gamma$}
\DisplayProof

&

\AxiomC{$\R,\R_{i}ww, \R_{i}ww, \R_{i}ww, \Gamma$}
\dashedLine
\RightLabel{Lem. \ref{lm:calculus-properties}}
\UnaryInfC{$\R, \R_{i}ww, \Gamma$}
\DisplayProof
\end{tabular}
\end{center}

\begin{center}
\begin{tabular}{c  @{\hskip 2em} c}
\AxiomC{$\R,\R_{i}ww, \R_{i}wu, \R_{i}wu, \Gamma$}
\RightLabel{$\stiteucl$}
\UnaryInfC{$\R, \R_{i}ww, \R_{i}wu, \Gamma$}
\RightLabel{$\stitrefl$}
\UnaryInfC{$\R, \R_{i}wu, \Gamma$}
\DisplayProof

&

\AxiomC{$\R,\R_{i}ww, \R_{i}wu, \R_{i}wu, \Gamma$}
\dashedLine
\RightLabel{Lem. \ref{lm:calculus-properties}}
\UnaryInfC{$\R, \R_{i}ww,\R_{i}wu, \Gamma$}\RightLabel{$\stitrefl$}
\UnaryInfC{$\R,\R_{i}wu,\Gamma$}
\DisplayProof
\end{tabular}
\end{center}

\noindent
(v) Last, the non-trivial case of permuting $\stitrefl$ over $\propag$ occurs when a relational atom of the form $\R_{i}ww$ is auxiliary in $\propag$. We therefore assume that in the first premise $\Lambda_{1} = \R, \R_{i}ww, x:\agdia \phi, y : \phi, \Gamma$ of the initial derivation (below left), the propagation rule is applied because there exists some string $\agdia^{n} \in \mathcal{P}_{\Lambda_{1}}(x,y) \cap L_{i}$ with $n \neq 0 \in \mathbb{N}$ and there exists a sequence of transitions from $x$ to $y$ containing transitions of the form $w \overset{\agdia}{\longrightarrow}w$ (%which are 
defined relative to $\R_{i}ww$). If instead we apply the $\stitrefl$ rule first (below right), then we obtain the sequent $\Lambda_{2} = \R, x:\agdia \phi, y:\phi, \Gamma$ which no longer contains the relational atom $\R_{i}ww$ that was used to apply the propagation rule in the initial derivation (left). Nevertheless, since $\agdia^{n} \in \mathcal{P}_{\Lambda_{1}}(x,y)$, there exists a sequence of transitions $x \overset{\agdia}{\longrightarrow} z_{1} \overset{\agdia}{\longrightarrow} \cdots z_{n-1} \overset{\agdia}{\longrightarrow} y$ containing transitions of the form $w \overset{\agdia}{\longrightarrow} w$. If we delete all transitions of the form $w \overset{\agdia}{\longrightarrow} w$ from the sequence of transitions, we will still have a valid sequence of transitions between $x$ and $y$. That is, if $\cdots z_i \overset{\agdia}{\longrightarrow} w \overset{\agdia}{\longrightarrow} w \overset{\agdia}{\longrightarrow} z_j \cdots$ occurs in our sequence, then clearly $\cdots z_i \overset{\agdia}{\longrightarrow} w \overset{\agdia}{\longrightarrow} z_j \cdots$ is a valid sequence.\footnote{Note that in the marginal case where our sequence of transitions is of the form $w \overset{\agdia}{\longrightarrow} w \overset{\agdia}{\longrightarrow} \cdots w \overset{\agdia}{\longrightarrow} w$, with all states (including $x$ and $y$) equal to $w$, the corresponding propagation automaton is $\mathcal{P}_{\Lambda_{2}}(w,w)$, which accepts the empty string $\varepsilon \in L_{i}$, thus allowing for the desired application of $\propag$.} Moreover, the resulting string $\agdia^{k}$ with $k < n$ will be in $L_{i}$. Therefore, since there exists some string $\agdia^{k} \in \mathcal{P}_{\Lambda_{2}}(x,y) \cap L_{i}$, the propagation rule may still be applied to $\Lambda_{2}$. %I guess there was a type with \mathcal{P}_{\Lambda_{2}}(w,w)
\begin{center}
\begin{tabular}{c @{\hskip 3em} c}
\AxiomC{$\R, \R_{i}ww, x:\agdia \phi, y : \phi, \Gamma$}
\RightLabel{$\propag$}
\UnaryInfC{$\R, \R_{i}ww, x:\agdia \phi, \Gamma$}
\RightLabel{$\stitrefl$}
\UnaryInfC{$\R, x:\agdia \phi, \Gamma$}
\DisplayProof

&

\AxiomC{$\R, \R_{i}ww, x:\agdia \phi, y : \phi, \Gamma$}
\RightLabel{$\stitrefl$}
\UnaryInfC{$\R, x:\agdia \phi, y:\phi, \Gamma$}
\RightLabel{$\propag$}
\UnaryInfC{$\R, x:\agdia \phi, \Gamma$}
\DisplayProof
\end{tabular}
\end{center}

\end{proof}

\begin{customthm}{\ref{thm:soundness-ldmnl}}[Soundness of $\calc$]
Every sequent derivable in $\calc$ is valid.
\end{customthm}

\begin{proof} We know by Thm. \ref{thm:g3ldm-properties} that all rules of $\calc$, with the exception of $\propag$, preserve validity. We therefore only consider the $\propag$ case and argue by contraposition that if the conclusion of the rule is invalid, then so is the premise. Consider the following inference:

\begin{center}
\AxiomC{$\R, w:\agdia \phi, u:\phi, \Gamma$}
\RightLabel{$\propag$}
\UnaryInfC{$\R,w:\agdia \phi, \Gamma$}
\DisplayProof
\end{center}

Let $\Lambda_{1} = \R, w:\agdia \phi, u:\phi, \Gamma$ and $\Lambda_{2} = \R, w:\agdia \phi, \Gamma$. Assume that there exists a model $M$ with interpretation $I$ such that $\R,w:\agdia \phi, \Gamma$ is not satisfied in $M$ with $I$. In other words, $\R$ holds in $M$ with $I$, but for all labelled formulae $v:\psi$ in $w:\agdia \phi, \Gamma$ the following holds: $M,v^{I} \not\Vdash \psi$.

We additionally assume that the side condition holds; that is, there exists some string $\omega \in \mathcal{P}_{\Lambda_{1}}(w,u) \cap L_{i}$. Since $\omega \in L_{i}$, we know that $\omega$ is of the form $\agdia^{n}$ with $n \in \mathbb{N}$. Additionally, since $\omega = \agdia^{n} \in \mathcal{P}_{\Lambda_{1}}(w,u)$, this means that there exists a sequence of relational atoms of the form $\R_{i}wz_{1}, \cdots, \R_{i}z_{n-1}u$ or $\R_{i}uz_{1}, \cdots, \R_{i}z_{n-1}w$ connecting $w$ and $u$. Therefore, we have that $(w^{I},z_{1}^{I}),$ $\cdots,$ $(z_{n-1}^{I},u^{I}) \in \R_{i}$ or $(u^{I},z_{1}^{I}), \cdots, (z_{n-1}^{I},w^{I}) \in \R_{i}$ in $M$ under $I$, which implies that $w$ and $u$ are mapped to worlds in the same equivalence class, \textit{i.e.}, $\R_{i}w^{I}u^{I}$ holds without loss of generality. By assumption, $M,w^{I} \not\Vdash \agdia \phi$ holds, meaning that $M,w^{I} \Vdash \agbox \overline{\phi}$ holds. Since both $R_{i}w^{I}u^{I}$ and $M,w^{I} \Vdash \agbox \overline{\phi}$ hold, we know that $M,u^{I} \Vdash \overline{\phi}$, implying that $M,u^{I} \not\Vdash \phi$. Thus, the premise is falsified by $M$ under $I$.

\end{proof}

\begin{customlem}{\ref{lm:forestlike-invariance}} Every labelled sequent generated throughout the course of computing $\mathtt{Prove_{n}}(w:\phi)$ is forestlike.

\end{customlem}

\begin{proof} We prove the result by induction on the number of instructions executed. Note that the input sequent $w : \phi$ is trivially forestlike.

\textit{Base Case.} We assume that one of the instructions (2 - 7) has been executed in the algorithm (initially, instructions 1 and 8 cannot be executed): (2) If $w : \phi$ is stable, then no sequent is generated. (3) If instruction 3 is executed, then in case (i) $w : \phi = w : \psi \vee \chi$, so the generated sequent is $w : \phi, w:\psi, w:\chi$, which is forestlike; in case (ii) $w : \phi = w : \psi \wedge \chi$, so the sequents $w: \phi, w:\psi$ and $w:\phi, w:\chi$ are generated, which are both forestlike. (4) If instruction 4 is executed, then $w:\phi = w:\agdiaone \psi$, so the sequent generated is of the form $w:\phi, w:\psi$, which is forestlike. (5) If instruction 5 is executed, then $w:\phi = w:\Diamond \psi$, so the sequent generated is of the form $w:\phi, w:\psi$, which is forestlike. (6) If instruction 6 is executed, then $w : \phi = w:[1] \psi$, and the sequent $\R_{1}wv,w_\phi, v:\psi$ is generated, which is forestlike. (7) If instruction 7 is executed, then $w : \phi = w : \Box \psi$, so the sequent $w:\phi, v:\psi$ is generated, which is forestlike.

\textit{Inductive Step.} We assume that our input sequent is forestlike, and argue that the generated sequent is forestlike: (1) If instruction 1 is executed, then no sequent is generated. (2) If instruction 2 is executed, then no sequent is generated. (3) Each of the operations in instruction 3 preserves the set of relational atoms $\R$ as well as the set of labels in the sequent; hence, the sequent generated after an execution of instruction 3 will be forestlike. (4) Instruction 4 labels nodes in the graph of the input sequent with additional formulae, but does not change the structure of the graph; therefore, the generated (output) sequent will be forestlike. (5) Similar to case 4. (6) Instruction 6 adds one additional $\R_{1}$ edge to a fresh labelled node $v$ in the graph of the input sequent, which adds additional branching in the graph of the generated sequent, and thus preserves the forestlike structure of the sequent. (7) Instruction 7 adds a new labelled formula to the sequent, which is akin to adding a new, disjoint labelled node to the graph of the input sequent; this preserves the forestlike structure of the sequent. (8) Instruction 8 connects the root of one choice-tree to the root of another choice-tree in the graph of the input sequent; the result is another tree, and thus, the generated sequent will be forestlike.

\end{proof}

\begin{customthm}{\ref{thm:termination-prove}}[Termination] For every labelled formula $w:\phi$, $\mathtt{Prove_{n}(w:\phi)}$ terminates.

\end{customthm}

\begin{proof} Let $\textit{sufo}(\phi)$ be the multiset of subformulae of $\phi$ defined in the usual way. Observe that new labels are only created through instructions 6 and 7 of the algorithm, and each time instruction 6 or 7 executes, the formula $w:[1] \psi$ or $w:\Box \psi$ (resp.) responsible for the instruction's execution, no longer influences the non-$[1]$-realization or non-$\Box$-realization of $w$. Therefore, the number of labels in any sequent generated by the algorithm is bounded by the number of $[1]$- and $\Box$-formulae contained in $\textit{sufo}(\phi)$ plus 1 (which is the label of the input formula). Moreover, the number of $\R_{1}$ relational atoms is bounded by the number of $[1]$-formulae.

%We say that a labelled sequent $\Lambda$ is \emph{set-based} iff for every label $w \in Lab(\Lambda)$ and every occurrence of a formula $\psi$ in $Sufo(\phi)$, $w : \psi$ occurs at most once in $\Lambda$. By inspecting the proof-search algorithm it is easy to verify that every labelled sequent generated will be set-based.

Instructions 3--5 add strict subformulae of formulae occurring in $\textit{sufo}(\phi)$ and do not create new labels or relational atoms. Due to the blocking conditions in instructions 3--5, any formula that is added with a label will only be added once. Therefore, the number of executions of instructions 3--5 is bounded by $|\textit{sufo}(\phi)|$ multiplied by 1 plus the number of $[1]$- and $\Box$-formulae occurring in $\textit{sufo}(\phi)$.

Last, observe that instruction 6 increases the breadth or height of a choice-tree in the input sequent, whereas instruction 7 adds a new label which acts as the root of a new choice-tree. This implies that the number of choice-trees is bounded by the number of $\Box$-formulae occurring in $\textit{sufo}(\phi)$. Each time instruction 8 executes the number of choice-trees in a resulting sequent decreases by 1. Since the number of choice-trees is bounded by the number of $\Box$-subformulae, eventually the number of choice-trees will decrease to $k \leq n$, at which point, the generated sequent will be $n$-choice consistent, and instruction 8 will no longer be applicable.

 Therefore, the algorithm will terminate.

\end{proof}

%\textcolor{com}{Oh man, this was great! :)}

%\textcolor{violet}{Thanks dude!}

\end{document}